\documentclass[lettersize,journal]{IEEEtran}
\usepackage{amsmath,amsfonts}
\usepackage{algorithm}
\usepackage{algpseudocode}
\usepackage{array}
\usepackage{subcaption}
\usepackage{textcomp}
\usepackage{stfloats}
\usepackage{url}
\usepackage{verbatim}
\usepackage{graphicx}
\usepackage{cite}
\usepackage{soul}
\usepackage{xcolor}
\usepackage{tcolorbox}
\usepackage{afterpage}
\usepackage{makecell}
\usepackage{threeparttable}
\usepackage{bm}
\usepackage{dsfont}
\usepackage{xcolor}
\usepackage{graphicx}
\usepackage{amssymb}
\usepackage{booktabs}
\usepackage{multirow}   
\usepackage{array}      
\usepackage{enumitem}

\usepackage{mathrsfs}  

\newtheorem{definition}{Definition}

\newtheorem{theorem}{Theorem}

\newtcolorbox{highlighted}{colback=yellow!50,colframe=yellow!50,arc=0pt,outer arc=0pt}

\begin{document}

\title{Digital Twin-Assisted In-Network and Edge Collaboration for Joint User Association, Task Offloading, and Resource Allocation in the Metaverse}

\author{Ibrahim Aliyu,~\IEEEmembership{Member,~IEEE}, Seungmin Oh , Sangwon Oh  and Jinsul Kim~\IEEEmembership{Member,~IEEE}
\thanks{This work was supported in part by the Institute of Information Communications Technology Planning Evaluation (IITP) grant funded by the Korean government (MSIT; RS-2025-00345030, Development of digital twin base network failure prevention and operation management automation technology, 50\%) and in part by the Innovative Human Resource Development for Local Intellectualization program through the IITP grant funded by the Korea government (MSIT; IITP-2025-RS-2022-00156287, 50\%).\textit{Corresponding authors: Jinsul Kim (email: jsworld@jnu.ac.kr)}.}
\thanks{Ibrahim Aliyu, Seungmin Oh, Sangwon Oh  and Jinsul Kim are with the Department of Intelligent Electronics and Computer Engineering, Chonnam National University, Gwangju 61186, Korea.}
}

\maketitle

\begin{abstract}
Advancements in extended reality (XR) are fueling the Advancements in extended reality (XR) are driving the development of the metaverse, which demands efficient real-time transformation of 2D scenes into 3D objects, a computation-intensive process that necessitates task offloading because of complex perception, visual, and audio processing. This challenge is further compounded by asymmetric uplink (UL) and downlink (DL) data characteristics, where 2D data are transmitted in the UL and 3D content is rendered in the DL. 
To address this issue, we propose a digital twin (DT)-based in-network computing (INC)-assisted multi-access edge computing (MEC) framework that enables real-time synchronization and collaborative computing via URLLC. In this framework, a network operator manages wireless and computational resources for XR user devices (XUDs), while XUDs autonomously offload tasks to maximize their utilities. We model the interactions between XUDs and the operator as a Stackelberg Markov game, where the optimal offloading strategy constitutes an exact potential game with a Nash Equilibrium (NE), and the operator’s problem is formulated as an asynchronous Markov decision process (MDP). 
We further propose a decentralized solution in which XUDs determine offloading decisions based on the operator’s joint UL-DL optimization of offloading mode (INC-E or MEC only) and DL power allocation. A Nash-asynchronous hybrid multi-agent reinforcement learning (AMRL) algorithm is developed to predict the UL user-associated and DL transmission power, thereby achieving NE. Simulation results demonstrate that the proposed approach considerably improves system utility, uplink rate, and energy efficiency by reducing latency and optimizing resource utilization in the metaverse environments.

\end{abstract}

\begin{IEEEkeywords}
Computation offloading, digital twin, deep reinforcement learning, game theory, in-network computing, multi-access edge computing
\end{IEEEkeywords}

\maketitle

\section{Introduction}

The rapid advancement of extended reality (XR) technologies within the metaverse is transforming gaming, workplace collaboration, and scientific research \cite{rosenberg2022}. XR applications require the real-time offloading of computationally intensive tasks from XR user devices (XUDs) to distributed computing resources to support immersive 3D rendering. Meanwhile, integrating digital twins (DTs) into edge computing creates new opportunities to enhance resource allocation through intelligence, efficiency, and flexibility  \cite{aliyu2024towards}. Although multi-access edge computing (MEC) has emerged as a prominent offloading paradigm, it suffers from resource contention, high signaling overhead, and limited scalability  \cite{aliyu2023toward, chen2022dynamic, pham2022partial}. Collaborative INC has recently been proposed as a promising complement to MEC; however, it introduces critical challenges. In particular, asymmetric uplink (UL) and downlink (DL) data sizes, where UL transmits 2D data, and DL delivers reconstructed 3D content, lead to increased inefficiencies and latency  \cite{dvorovzvnak2020monster}. Moreover, dynamic XUD task arrivals, interference-prone wireless channels, and energy-intensive DL transmissions further aggravate offloading inefficiency and user association imbalance  \cite{tutuncuoglu2024sample, yan2021pricing}. These challenges motivate the development of a unified framework that incorporates interference-aware channel selection, adaptive offloading modes, task splitting, and energy-aware downlink control to enable scalable, low-latency XR services.  Fig.~\ref{fig:INC-E_arch} illustrates an overview of the proposed DT-assisted INC-enabled architecture and contrasts it with conventional MEC-based offloading.

\begin{figure}[t]
  \centering
  \includegraphics[width=0.4\textwidth]{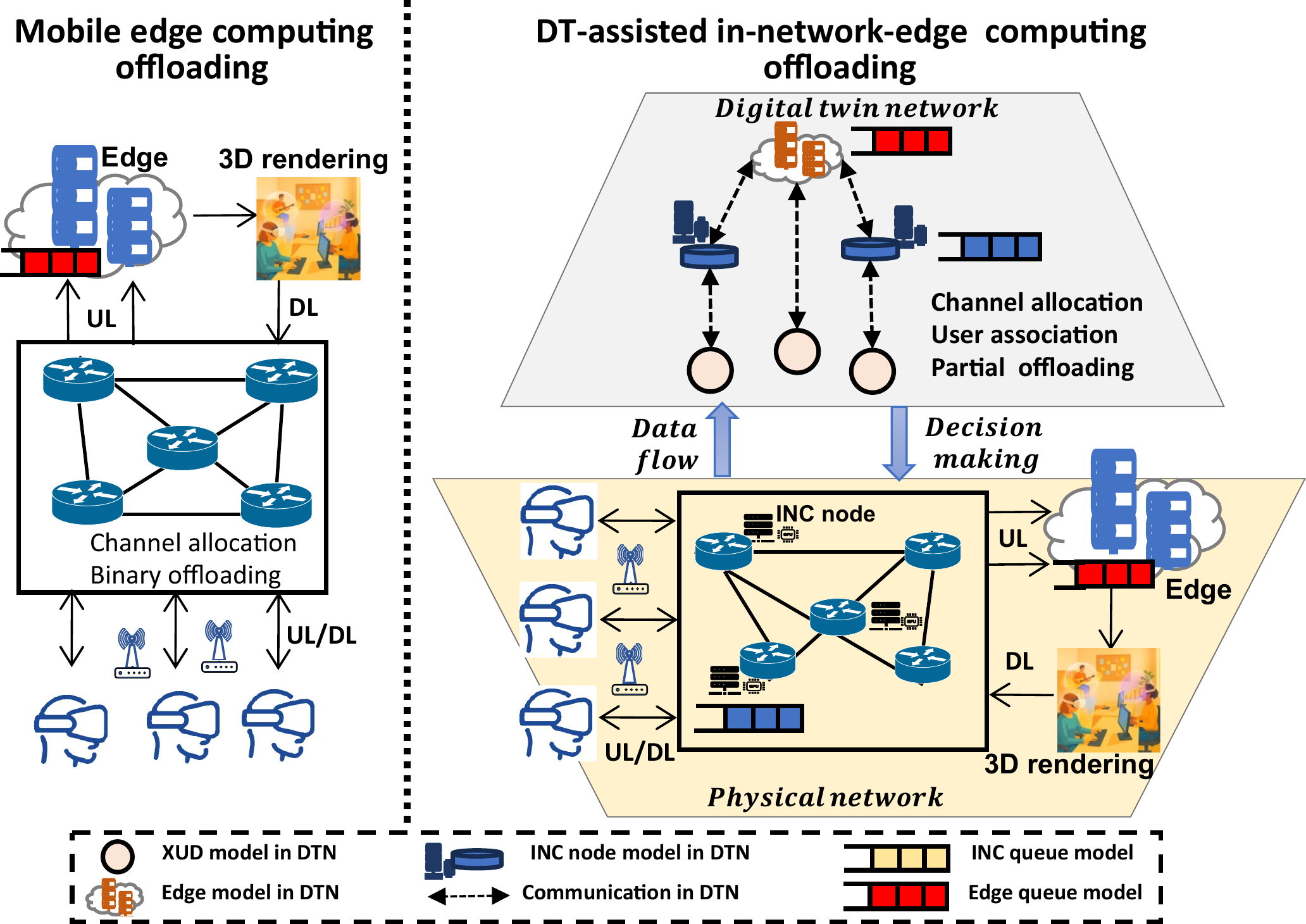}
  \caption{DT-assisted INC-E architecture}
  \label{fig:INC-E_arch}
\end{figure}

\subsection{Our Approach and Rationale}

Our approach addresses three critical challenges for scalable metaverse XR: (i) asymmetric UL/DL computational demands, (ii) decentralized user association under interference, and (iii) joint optimization of partial offloading and downlink energy control. To overcome the limitations of conventional MEC-only architectures, we propose a DT-based INC-E system that integrates INC nodes with traditional edge servers. This architecture enables a progressive inference workflow by functionally splitting task processing across multiple network layers.
The framework is modeled as a Stackelberg game, with the network operator as the leader and XUDs as followers. The operator employs MARL to optimize the offloading-mode (OFMO) and DL power allocation (POAL). OFMO determines whether a task should be processed collaboratively via INC and edge servers or handled solely by the MEC, effectively deciding when INC preprocessing is advantageous or necessary.
Following the operator’s decision, each XUD selects a wireless channel and a partial offloading ratio if the predicted mode is collaborative. Otherwise, the entire task is offloaded directly to the MEC server. This decentralized, interference-aware channel selection implicitly determines each user’s association with a specific INC node, enabling self-organized uplink coordination without centralized control.

\subsection{Motivation and Our Contribution}

Integrating INC-E into metaverse task offloading introduces challenges beyond traditional MEC. Although DT-assisted MEC brings computation closer to users, it still faces resource bottlenecks, latency issues, and limited adaptability in dynamic XR environments. INC-E reduces MEC load and delay through partial in-network execution, but increases the complexity of distributed decision-making and introduces load fluctuations. Consequently, a new adaptive algorithm is required for real-time resource and energy allocation, dynamic OFMO adjustment, and efficient user-to-INC association.
Jointly optimizing task offloading, user association, task splitting, and DL allocation is further complicated by asymmetric UL/DL demands and dynamic channel interference  \cite{dvorovzvnak2020monster, yan2021pricing, li2022digital}. Existing methods often overlook multi-stage in-network processing and struggle with sparse user requests, resulting in inefficient outcomes  \cite{masaracchia2022digital, guo2023intelligent}. To address this limitation, we propose a DT-based Nash multi-agent reinforcement learning (Nash-MARL) framework for interference-aware channel selection and resource allocation. XUD users perform game-theoretic channel selection guided by periodic predictions of operator resource allocation, while XR devices autonomously select their INC nodes with operator assistance to minimize interference. This decentralized approach achieves scalable equilibrium without the need for global coordination  \cite{tutuncuoglu2024sample, hou2022device, yang2023decentralized}. 

The main contributions of this study are summarized as follows:
\begin{itemize}
    \item	We formulate a DT-based system-wide utility maximization problem that jointly integrates task OFMO, user association, partial task splitting, and downlink power control.
    \item We model the interaction between the operator and XUDs as a Stackelberg Markov game, in which operator agents optimize UL and DL parameters. At the same time, XUDs respond through a self-organizing association to achieve overall system objectives.
    \item We design UL and DL agents that operate asynchronously, capturing the sequential dependencies and asymmetric data flow inherent in XR tasks within INC-E environments. 
    \item Comprehensive experiments show our method improves utility, uplink rates, energy efficiency, fairness, and reduces latency for XR, while enabling efficient operator resource allocation over conventional baselines.
\end{itemize}

The remainder of this research is organized as follows: Section~\ref{systemModel} details the system model and problem formulation; Section~\ref{ProposedSolution} presents the proposed solution addressing both user and operator objectives; Section~\ref{NumericalResults} reports numerical results; Section~\ref{RelatedWork} reviews related work; and Section~\ref{Conclusion} concludes the study.

A preliminary version of this work appeared at the IEEE NOMS 2024 Workshop \cite{aliyu2024towards} , presenting partial results under simplified DT settings. The current research extends this study to a broader scenario, including dynamic user association, asynchronous UL and DL optimization, and energy-aware decision-making in INC-E environments. Additionally, we develop a decentralized Nash-AMRL framework to provide a more comprehensive model and evaluation.

\section{System model and problem formulation}
\label{systemModel}
As illustrated in Fig.~\ref{fig:INC-E_arch}, the system comprises XUDs, INC-enabled nodes (CNs), and a MEC server (ES), coordinated by a DT network that facilitates user association, partial offloading, and UL/DL decision-making. The infrastructure supports real-time DT operation via optimized resource allocation, while URLLC links ensure high reliability and minimal latency.
Let $\mathcal{M} = \{1, 2, \ldots, M\}$ denote the set of $M$ user equipments (XUDs), $\mathcal{K} = \{1, 2, \ldots, K\}$ denote the set of $K$ INC nodes (CNs), and $R$ be the ES. The CNs and ESs are connected to an access point (AP) to provide network access for XUDs and APs. 

\subsubsection{Offloading Model in INC-E Network} 
We adopt a time slot model in which XUDs and CNs are fixed within each time slot but may vary across slots. At each time slot $t$, each XUD generates a computational task characterized by $J_m = \{\eta_m, T_m^{\max}\}$, where $\eta_m = \frac{C_m}{I_m^t}$ represents the task complexity (cycles/bits), $I_m^t$ is the task size in bits, $C_m$ is the number of CPU cycles required to execute the task, and $T_m^{\max}$ denotes the maximum tolerable latency for the task $J_m$.
In our scenario, we considered partial offloading, exploiting parallel processing to reduce latency. A portion of a task can be executed at the CNs, while the remaining portion is processed at the MEC server. The MEC server can serve multiple XUDs, whereas CNs have limited capacity.
Let $\Phi_{P,m} \triangleq \{\lambda_{mk}, \aleph_m\}$ denote the partial offloading decision variables, where $\lambda_{mk}$ is the fraction of task $J_m$ executed at INC node $k$, and $\aleph_m = 1 - \sum_{k\in\mathcal{K}} \lambda_{mk}$ is the fraction executed at the edge server (ES). The computing-resource allocation variables are defined as $\Phi_{L,m} \triangleq \{\Phi_{\lambda}, \Phi_{\aleph}\}$, where $\Phi_{\lambda}$ and $\Phi_{\aleph}$ represent the computational resources allocated to the CNs and the ES, respectively. We assume that tasks are generated with high granularity, enabling partial offloading. The task allocation satisfies the constraints: \(
I^t_m = \aleph_m I^t_m + \sum_{k\in \mathcal{K}} \lambda_{mk} I^t_m, \quad C_m = \aleph_m C_m + \sum_{k\in \mathcal{K}} \lambda_{mk} C_m \).

At each time slot $t$, users generate task requests denoted as \( \bm{\mu}_t = \{\mu_1^{(t)}, \mu_2^{(t)}, \ldots, \mu_M^{(t)}\} \). The request state  for user \( m\) can be \( 0 \) (no request) or \( f \in \mathcal{F} \) (requesting task \( f \)). These requests follow a first-order Markov chain with \((|\mathcal{F}| + 1)\) states, where future requests depend only on the current state, and the transition probabilities are unknown \cite{chen2022dynamic}.

Let $\mathbf{u}^t = \left( u_{mk}^t \right)_{m \in \mathcal{M}, k \in \mathcal{K}}$  denote the collection of user associations, which indicates the number of XUDs that CN can support in a given time slot $t$. 
Let the XUD channel selection decision variable be $s_m \in \{0,1,\ldots,K\}$, where $s_m=0$  represents a non-collaborative case in which the task is offloaded to the MEC channel, and $s_m=k$ ($k\in\mathcal{K}$) denotes that the XUD selects channel $k$, corresponding to its associated CN $k$.
We denote the predictive OFMO state as$o_m \in \{0,1\}$, where $o_m=0$ indicates non-collaboration and forces $s_m=0$, whereas $o_m=1$  indicates collaboration and forces $s_m=1$. Formally, this logical relation is captured by the coupling constraint  $o_m^t = 0 \;\Rightarrow\; s_m^t = 0$, equivalent
\begin{equation}
 s_m^t (1 - o_m^t) = 0.
\label{eq:ofmo_coupling}
\end{equation}
Thus, when  $o_m^t = 0$, the only feasible choice is  $s_m^t = 0$, whereas any collaborative selection   implicitly requires $o_m^t = 1$.

\subsubsection{INC-E DT Model}
DT services generate virtual replicas of physical systems, replicating hardware, applications, and real-time data. The URLLC-based INC-E’s DT is defined as $\ DT = \{\tilde{\mathcal{M}},\ \tilde{\Phi}_L\},$ where $\{\tilde{\mathcal{M}},\ \tilde{\Phi}_L\}$ represents the system’s virtual mirror, including $M$ XUDs and $\tilde{\Phi}_L$ (CNs and ES). The DT layer, informed in real time, automates control through services such as data analysis, decision-making, and instant optimization, focusing on tasks such as offloading strategies and resource allocation.

Each $m$-th XUD’s specific DT is associated with a CN node
for processing and defined as $\ DT_m^{cn} = (f_m^{cn}, \tilde{f}_m^{cn}),$ where $f_m^{cn}$ 
denotes the estimated processing rate, and  $\tilde{f}_m^{cn}$ quantifies the deviation from the actual processing rate between the physical XUD and its DT  \cite{van2022urllc}. In the DT layer, the estimated processing rate,  $f_m^{cn}$, mirrors XUD behavior, driving optimization decisions for device configurations. This rate is the focus of our optimization, with its deviation set as a predetermined percentage for simulations, following established practices \cite{do2022digital}.

Similarly, for the $\tilde{\Phi}_L$-th INC-E computing resource (CNs
and	ES),	its	DT	($DT_{\tilde{\Phi}_L}^{cn}$)	is	formulated	as $DT_{\tilde{\Phi}_L}^{cn} = (f_{\tilde{\Phi}_L}^{cm}, \tilde{f}_{\tilde{\Phi}_L}^{cm}),$ 
where $f_{\tilde{\Phi}_L}^{cm}$ represents the estimated processing rate of the real INC-E, and $\tilde{f}_{\tilde{\Phi}_L}^{cm}$ characterizes the discrepancy in the processing rate estimate. The DT emulation of INC-E (CNs and ES) provides valuable insights into processing rates, facilitating efficient resource allocation and reducing latency by adjusting offloading ratios and computing resource allocation.

\subsection{Communication Model}
The AP, equipped with $L$ antennas and serving $M$ single-antenna XUDs, establishes channel connections with computing resource $\tilde{\Phi}_L$ represented by $\bm{h}_{m\tilde{\Phi}_L} = \sqrt{g_{m\tilde{\Phi}_L}} \cdot \bm{ \bar{h}}_{m\tilde{\Phi}_L}$ , where $g_m$ is the large-scale channel coefficient and $\bm{ \bar{h}}_{m\tilde{\Phi}_L}$ is small-scale fading following $CN(0, \bm{I})$, with $CN(.,.)$ denoting a complex circularly symmetric Gaussian distribution. A channel matrix $\bm{H}_{\tilde{\Phi}_L} = [\bm{h}_{1\tilde{\Phi}_L}, \bm{h}_{2\tilde{\Phi}_L}, \ldots, \bm{h}_{M\tilde{\Phi}_L}] \in \mathbb{C}^{L \times M}$ contains connections from $m$-th SM to the $\tilde{\Phi}_L$-th AP. Each XUD is allocated bandwidth, $b_m$. Match filtering and successive interference cancellation (MF-SIC) are employed to improve transmission performance \cite{fang2001performance}.

\subsubsection{Uplink}
The signal-to-interference-plus-noise (SINR) at the $\tilde{\Phi}_L$-th AP by the $m$-th XUD is defined as $\gamma^t_{m\tilde{\Phi}_L}(\mathbf{p^t},\mathbf{u}^t) = \frac{\sum_{k \in \mathcal{K}} u_{mk}^t p^t_{m\tilde{\Phi}_L} \lVert h^t_{m\tilde{\Phi}_L} \rVert^2}{\mathcal{I}^t_{m\tilde{\Phi}_L}(\mathbf{p^t},\mathbf{u}^t) + N_0},$ where $p^t_{m\tilde{\Phi}_L}$ is the transmit power of the $m$-th XUD, $N_0$ is the noise power, $p^t = [p^t_{m\tilde{\Phi}_L}]_{m=1}^M$, and $\mathcal{I}^t_{m\tilde{\Phi}_L}(\mathbf{p^t},\mathbf{u}^t) = \sum_{n \neq m}^{M} p^t_{n\Phi_L} \frac{\lvert h_{m\Phi_L}^H h_{n\Phi_L} \rvert^2}{\lVert h_{m\Phi_L} \rVert^2} $ is the interference by other XUDs. Consequently, the uplink URLLC transmission rate is expressed as \cite{ren2020joint,she2017radio}:
\begin{equation}
\label{eq:uplink_rate}
\begin{aligned}
\omega^t_{m\Phi_L}(\mathbf{p}^t,\mathbf{u}^t)
&\approx B \log_2\!\left(1 + \gamma_{m\Phi_L}(\mathbf{p}^t,\mathbf{u}^t)\right) \\
&\quad - B \sqrt{\frac{V_{m\Phi_L}(\mathbf{p}^t,\mathbf{u}^t)}{N}}
\frac{Q^{-1}(\epsilon)}{\ln 2}.
\end{aligned}
\end{equation}

\noindent where $B$ is the system bandwidth, $\epsilon$ denotes the decoding error probability, $\gamma_{m\tilde{\Phi}_L}(p,n)$ is the signal-to-noise ratio (SNR) observed by the m-th XUD,
$Q^{-1}(.)$ is the inverse Q-function, and $Q(x) = \frac{1}{\sqrt{2\pi}} \int_x^{\infty} e^{-t^2/2} \,dt$, and $V_{m\tilde{\Phi}_L}$ is the channel dispersion, defined as $V_{m\tilde{\Phi}_L}(p,n) = 1 - \left[1 + \gamma_{m\tilde{\Phi}_L}(\mathbf{p^t},\mathbf{u}^t)\right]^{-2}$. This expression calculates the uplink
rate for the selected destination, accounting for channel conditions, bandwidth allocation, and transmit power. In this study, we do not optimize the uplink power.

The minimum transmission rate of user k is determined by \eqref{eq:uplink_rate}, with the interference from other users assumed to remain below a threshold ${\mathcal{I}^t_{m\tilde{\Phi}_L}(\mathbf{p^t},\mathbf{u}^t) + N_0}$ . Following \cite{chen2022dynamic}, our algorithm manages channel interference without considering power control or specific coding schemes. When multiple users offload tasks to the same location, interference can considerably reduce uplink rates, forcing some users to process tasks locally.

From \eqref{eq:uplink_rate}, as multiple users select the same INC node (or channel), interference grows, decreasing uplink rates and increasing latency. As a result, users self-organize to avoid overloaded INCs, achieving a decentralized, interference-aware association.
The UL wireless transmission latency from the $m$-th XUD to the $\tilde{\Phi}_L$-th INC-E resource is expressed as:
\begin{equation}
    T_{m\tilde{\Phi}_L}^{\text{CO}}(\mathbf{p^t},\mathbf{u}^t,\tilde{\Phi}_L) = \max_{\forall\Phi_L} \left\{\frac{\sum_{k \in \mathcal{K}} u_{mk}^t \Phi_L I^t_m}{\omega_{m\Phi_L}(\mathbf{p^t},\mathbf{u}^t)}\right\}. \label{eq:uplink_latency}
\end{equation}

\noindent where $p_m^t$ is the uplink transmit power, $h_m^t$ is the channel gain, and $\sigma^2$ is the Gaussian noise variance. This formulation inherently accounts for channel interference, guiding OFMO decisions through a communication-performance trade-off.

\subsubsection{Downlink}
Let let $\mathbf{p'^t}= (p'^t)_{m \in \mathcal{M}}$  denote the DL POAL used by the MEC server (ES) to communicate with XUDs in time slot $t$.
Unlike the uplink transmit power $\mathbf{p^t}$  at the UL stage, the $\mathbf{p'^t}$ is optimized at the DL stage. Since INC nodes are not involved in the downlink, the transmission and energy formulations depend solely on the MEC server. Accordingly, the DL URLLC transmission rate is expressed as:
\begin{equation}
\begin{split}
    \omega'^t_{m\Phi_L}(\mathbf{p'^t},\mathbf{u}^t) &\approx B \log_2\left[1 + \gamma^t_{m\Phi_L}(\mathbf{p'^t},\mathbf{u}^t)\right] \\
    &\quad - B\sqrt{\frac{V_{m\Phi_L}(\mathbf{p'^t},\mathbf{u}^t)}{N}} \frac{Q^{-1}(\epsilon)}{\ln 2}, \label{eq:downlink_rate}
\end{split}
\end{equation}

In the DL stage, the data $I^t_m$ received in the UL is digitized at the MEC server and returned to the XUD. The rendered data is expressed as \cite{yu2023asynchronous}:

\begin{equation}
    I'^t_m = f_m(I^t_m)
\end{equation}

\noindent where $f_m(\cdot)$ is the size translation and rendering function from 2D to 3D, tailored to the specific metaverse application. Accordingly, the DL transmission latency is expressed as
\begin{equation}
    T_{m}^{'\text{CO}}(\mathbf{p'^t},\mathbf{u}^t) = \max_{\forall\Phi_L} \left\{\frac{\Phi_L I'^t_m}{\omega'^t_{m\Phi_L}(\mathbf{p'^t},\mathbf{u}^t)}\right\}. \label{eq:downlink_latency}
\end{equation}
The DL energy consumed at time step $t$ is given by
\begin{equation}
    E'^t(p'^t_m) = \sum_{m \in \mathcal{M}} p'^t_m \times \omega'^t_{m\Phi_L}(\mathbf{p'^t},\mathbf{u}^t).
\end{equation}

\subsection{Computation Model}
In the computation model, each XUD generates a granular computation task $J_m$, a portion of which can be executed by the CNs and the remaining portion at the ES. The model is defined as follows:

\subsubsection{Non-collaborative task offloading}
In the non-collaborative mode, particularly relevant for metaverse XR scenarios, users offload tasks entirely to the MEC server for computation and subsequent rendering of 3D data. This approach follows conventional offloading methods reported in the literature \cite{yu2023asynchronous}. 
The task $J_m$ when the offloaded portion satisfies $\aleph_m=1$, it is executed at the ES with the estimated processing rate $f_m^{em}$, incurring a latency of: $\tilde{T}_m^{em}(\aleph_m,f_m^{em}) = \frac{\aleph_m C_m}{f_m^{em}}$. The latency gap
$\Delta T_m^{em}$ between the actual execution and the DT estimated $\Delta T_m^{em}(\aleph_m,f_m^{em}) = \frac{\aleph_m C_m \tilde{f}_m^{em}}{f_m^{em}(\tilde{f}_m^{em}-f_m^{em})}$. Thus, the actual execution latency at the MEC is 	$T_m^{em}(\aleph_m,f_m^{em}) = \Delta T_m^{em}(\aleph_m,f_m^{em}) + \tilde{T}_m^{em}(\aleph_m,f_m^{em})$.
The total latency in non-collaborative mode comprises the uplink transmission delay $T_{m\tilde{\Phi}_{\aleph}}^{\text{CO}}(\mathbf{p^t},\mathbf{u}^t,\tilde{\Phi}_{\aleph})$, MEC execution delay $T_m^{em}$, and downlink delay $T_{m\tilde{\Phi}_{\aleph}}^{'\text{CO}}(\mathbf{p'^t},\mathbf{u}^t,\tilde{\Phi}_{\aleph})$ expressed as:  
\begin{equation}
T_m^{O} = T_{m\tilde{\Phi}_{\aleph}}^{\text{CO}}(\mathbf{p^t},\mathbf{u}^t,\tilde{\Phi}_{\aleph}) + T_m^{em}(\aleph_m,f_m^{em}) + T_{m\tilde{\Phi}_{\aleph}}^{'\text{CO}}(\mathbf{p'^t},\mathbf{u}^t,\tilde{\Phi}_\aleph).
\end{equation}

\subsubsection{Collaborative task offloading}: In the collaborative mode, users partially offload their tasks to INC nodes for preprocessing, with the final rendering or aggregation performed at the MEC server before returning results to the users. 
For the INC node, the portion $J_m$ portion $\lambda_{mk}$ is executed at the CNs with the estimated processing rate $f_m^{cn}$. Consequently, the estimated CN execution latency is $\tilde{T}_{mk}^{cn}( \lambda_{mk},f_m^{cn}) = \max_{\forall k \in \mathcal{K}} \left\{\frac{\lambda_{mk} C_m}{f_m^{cn}}\right\} \label{eq:INC_processing_time}$. The discrepancy
between the real-world and DT execution at the CNs is estimated as:	 $\Delta T_{mk}^{cn}(\lambda_{mk},f_m^{cn}) = \frac{\lambda_{mk} C_m \tilde{f}_{mk}^{cn}}{f_m^{cn} (f_m^{cn} - \tilde{f}_{mk}^{cn})}$. Thus, the actual CN
processing time is $T_{mk}^{cn}\!\left(\lambda_{mk},f_m^{cn}\right)
= \Delta T_{mk}^{cn}\!\left(\lambda_{mk},f_m^{cn}\right) + \tilde{T}_{mk}^{cn}\!\left(\lambda_{mk},f_m^{cn}\right)$.  
 
The total latency in the collaborative mode, including transmission to the INC $T_{m\tilde{\Phi}_{\lambda}}^{\text{CO}}(\mathbf{p^t},\mathbf{u}^t,\tilde{\Phi}_{\lambda})$, INC preprocessing time $T_{mk}^{cn}(\lambda_{mk},f_m^{cn})$, MEC execution and downlink delay, $T_{m\tilde{\Phi}_{\lambda}}^{'\text{CO}}(\mathbf{p'^t},\mathbf{u}^t,\tilde{\Phi}_{\lambda})$, is given by
\begin{equation}
\label{eq:collab_latency}
\begin{aligned}
T_m^{C}
&= T_{m\tilde{\Phi}_{\lambda}}^{\text{CO}}
   \!\left(\mathbf{p}^t,\mathbf{u}^t,\tilde{\Phi}_{\lambda}\right) + T_{mk}^{cn}
   \!\left(\lambda_{mk},f_m^{cn}\right) \\
&\quad + T_{m\tilde{\Phi}_{\lambda}}^{\prime\text{CO}}
   \!\left(\mathbf{p}^{\prime t},\mathbf{u}^t,\tilde{\Phi}_{\lambda}\right).
\end{aligned}
\end{equation}

\subsubsection{Latency Model}
The end-to-end (E2E) latency $J_m$ of task $J_m$ comprises transmission, processing, queueing, and forwarding delays accumulated across the INC node and the MEC server, depending on the operator-selected offloading mode.
When $o_m = 0$ the entire task fraction $\aleph_m = 1$ is executed at the MEC server. The corresponding E2E latency is $\widetilde{T}_m^{O} =
T_m^{O} + Q^{\mathrm{mec}}$, where Qmec denotes the MEC queueing delay. When $o_m = 1$, a task fraction $\lambda_{mk}$ is executed at INC node $k$, while $\aleph_m = 1 - \sum_{k}\lambda_{mk}$ is executed at the MEC server. The corresponding end-to-end latency is  $\widetilde{T}_m^{C} = T_m^{C} + Q^{\mathrm{cn}}_{k_m} + T_m^{\mathrm{fw}}(k_m) + Q^{\mathrm{mec}}$, where $Q^{\mathrm{cn}}_{k_m}$ is the queueing delay at the selected INC node $k$, $T_m^{\mathrm{fw}}(k_m)$ is the forwarding delay from the INC node $k$ to the MEC, and $Q^{\mathrm{mec}}$ is the MEC queueing delay. The overall latency under the operator-selected offloading mode is
	\begin{equation}
T_m^{\mathrm{e2e}}
=
\mathds{1}_{\{s_m = 0\}}\, \widetilde{T}_m^{O}
+
\mathds{1}_{\{s_m \neq 0\}}\, \widetilde{T}_m^{C}.
\label{eq:unified_e2e_final}
\end{equation}
     
Note that this latency $T_m^{\mathrm{e2e}}$ depends solely on the user’s offloading decision
 $s_m^t$. The operator variable $o_m^t$ acts only as a feasibility control via the coupling constraint $s_m^t(1 - o_m^t)=0$, and therefore does not appear explicitly in~\eqref{eq:unified_e2e_final}.

\subsection{Problem Formulation}
We assume consider a system that in which each XUD $m \in \mathcal{M}$ autonomously determines its offloading strategy so as to minimise minimize task latency and energy consumption while maximising maximizing its own utility, given the operator’s OFMO decision. The instantaneous utility of XUD $m$ at time slot $t$ is then expressed as
\begin{equation}
\begin{aligned}
U_m^t
&=
g_t \Big[
\widetilde{T}_m^{O}
-
T_m^{\mathrm{e2e}}
\Big]
-
w_m E'^t(p'^t_m)  \\
\end{aligned}
\label{eq:utility_final_clean_corrected}
\end{equation}
\noindent where $E'^t(p'^t_m)$ denotes the downlink energy consumed by user $m$ at time $t$.

The XUD seeks aims to maximise maximize its long-term, time-averaged utility:
\begin{align}
\mathcal{P}_U:~\max_{\{s_m^t\}}~~&
\lim_{T\to\infty} 
\frac{1}{T} 
\sum_{t=1}^{T} U_m^t
\\
\text{s.t.}\quad
& s_m^t (1 - o_m^t) = 0,
\tag{12a}
\label{eq:constraint_15a_updated}
\\
& T_m^{\mathrm{e2e}}(s_m^t) \leq T_m^{\max}, 
\tag{12b}
\label{eq:constraint_15b_updated}
\\
& s_m^t \in \{0,1,\ldots,K\}.
\tag{12c}
\label{eq:constraint_15c_updated}
\end{align}

From the operator’s perspective, the instantaneous task latency experienced by XUD $m$ in slot $t$ is given by ~\eqref{eq:unified_e2e_final}. The corresponding instantaneous network cost is
\begin{equation}
C^t_{\pi} 
=
\sum_{m\in\mathcal{M}}
\bigl[
T_m^{\mathrm{e2e}}
+
E'^t(p'^t_m)
\bigr]
\label{operator_cost_updated_final}
\end{equation}
where $E'^t(p'^t_m)$ denotes the downlink energy cost associated with the power allocation $p'^t_m$.

The operator’s policy $\theta_{\pi}$ determines both the OFMO vector $\mathbf{o}^t = (o_1^t,\ldots,o_M^t)$  and the downlink POAL vector $\mathbf{p'}^t = (p'^t_1,\ldots,p'^t_M)$, and aims to minimize the long-term discounted cost:
\begin{align}
\mathcal{P}_O:~\min_{\theta_{\pi}}~~&
\mathbb{E}_{\pi}
\left[
\sum_{t=0}^{\infty} 
\gamma^t C^t_{\pi}
\right]
\\
\text{s.t.}\quad
& p'^t_m \in [p'_{\min}, p'_{\max}], 
\quad \forall m,t,
\tag{14a}
\\
& \sum_{\{m:s_m^t=j\}} I^t_m \leq C_j, 
\quad \forall j\in\mathcal{K},
\tag{14b}
\end{align}
where $I^t_m$ is the input data size of user $m$, and $C_j$ is the capacity of INC node $j$.

The interaction between XR users and the network operator is modeled as a Stackelberg Markov game. In each time slot, the operator (leader) first announces $(\mathbf{o}^t,\mathbf{p'}^t)$ according to its policy $\theta_{\pi}$, and then the XUDs (followers) respond by selecting $s_m^t$ to minimize their utilities under the coupling constraint \eqref{eq:ofmo_coupling}.

Within this framework, address three fundamental questions. First, we examine whether a subgame perfect equilibrium (SPE) exists, i.e., whether there exists a joint offloading–allocation strategy from which neither the XUDs nor the operator has any incentive to deviate at any stage of the game. 

\begin{definition}[SPE]
Let $(\bm{o}^{\ast}, \bm{p'}^{\ast})$ be a solution to the operator’s optimization 
problem, and $\bm{s}^{\ast}$ the best response strategy profile of the XUDs. 
Then the tuple $(\bm{s}^{\ast}, \bm{o}^{\ast}, \bm{p'}^{\ast})$ is a 
subgame perfect equilibrium (SPE) of the hierarchical Stackelberg game $
\Gamma = \{\Gamma_O,\Gamma_U\}$,
if the operator and the XUDs satisfy
\[
C(\bm{s}^{\ast},\bm{o}^{\ast},\bm{p'}^{\ast})
\leq
C(\bm{s}^{\ast},\bm{o},\bm{p'}),
\quad \forall (\bm{o},\bm{p'}),
\]
\[
U_m^t(s_m^{\ast},o_m^{\ast},p'^{\ast}_m)
\geq
U_m^t(s_m,o_m^{\ast},p'^{\ast}_m),
\quad 
\forall m,\; s_m\in\{0,\ldots,K\}.
\]
\end{definition}
If the game admits an SPE, the second question is whether the SPE can be computed efficiently. The third question explores whether a MARL-based learning framework can be applied to learn optimal policies for the operator while simultaneously maximizing the  user utility. Before addressing these questions, we first characterize the XUDs’ best response and provide analytical results under simplifying assumptions.

\noindent\textbf{Definition 2 (Pure Nash Equilibrium and Best Reply).}
A pure strategy Nash equilibrium (NE) is a strategy profile 
$\bm{s}^{\ast} = (s_m^{\ast}, s_{-m}^{\ast})$
in which every XUD plays a best reply to the strategies chosen by all other users. 
Formally, $\bm{s}^{\ast}$ is a NE if
\[
U_m(s_m^{\ast}, s_{-m}^{\ast}) 
\ge 
U_m(s_m, s_{-m}^{\ast}), 
\qquad 
\forall s_m \in \mathcal{S}_m,\ \forall m \in \mathcal{M}.
\]
\noindent
Given a strategy profile $(s_m, s_{-m})$, a \emph{better reply} of XUD $m$
is any strategy $s'_m$ such that
\[
U_m(s'_m, s_{-m})
>
U_m(s_m, s_{-m}),
\]
Similarly \emph{best reply} is a strategy 
$s_m^{\ast}$ satisfying
\[
U_m(s_m^{\ast}, s_{-m})
\ge 
U_m(s_m, s_{-m}),
\qquad 
\forall s_m \in \mathcal{S}_m.
\]

\section{Proposed solution}
\label{ProposedSolution}
The objective function is nonconvex due to the presence of partial offloading decision variables and nonlinear relationships. In practice, it is computationally intractable to solve the problem directly, as the overall decision-making process involves binary channel selection and OFMO variables, along with complex latency-energy interactions over time. To address this challenge, we decompose the DT optimization problem into two subproblems: interference-aware channel selection and resource allocation (OFMO/POAL) across time slots in the INC-E cyber twin, without XUD request transition probabilities. Fig.~\ref{fig:INC-E architecture} illustrates the proposed solution framework, which uses game theory for channel selection and AMRL for resource allocation. Detailed descriptions of the proposed methods are provided in the following subsection.

\begin{figure}[ht]
  \centering
  \includegraphics[width=0.35\textwidth]{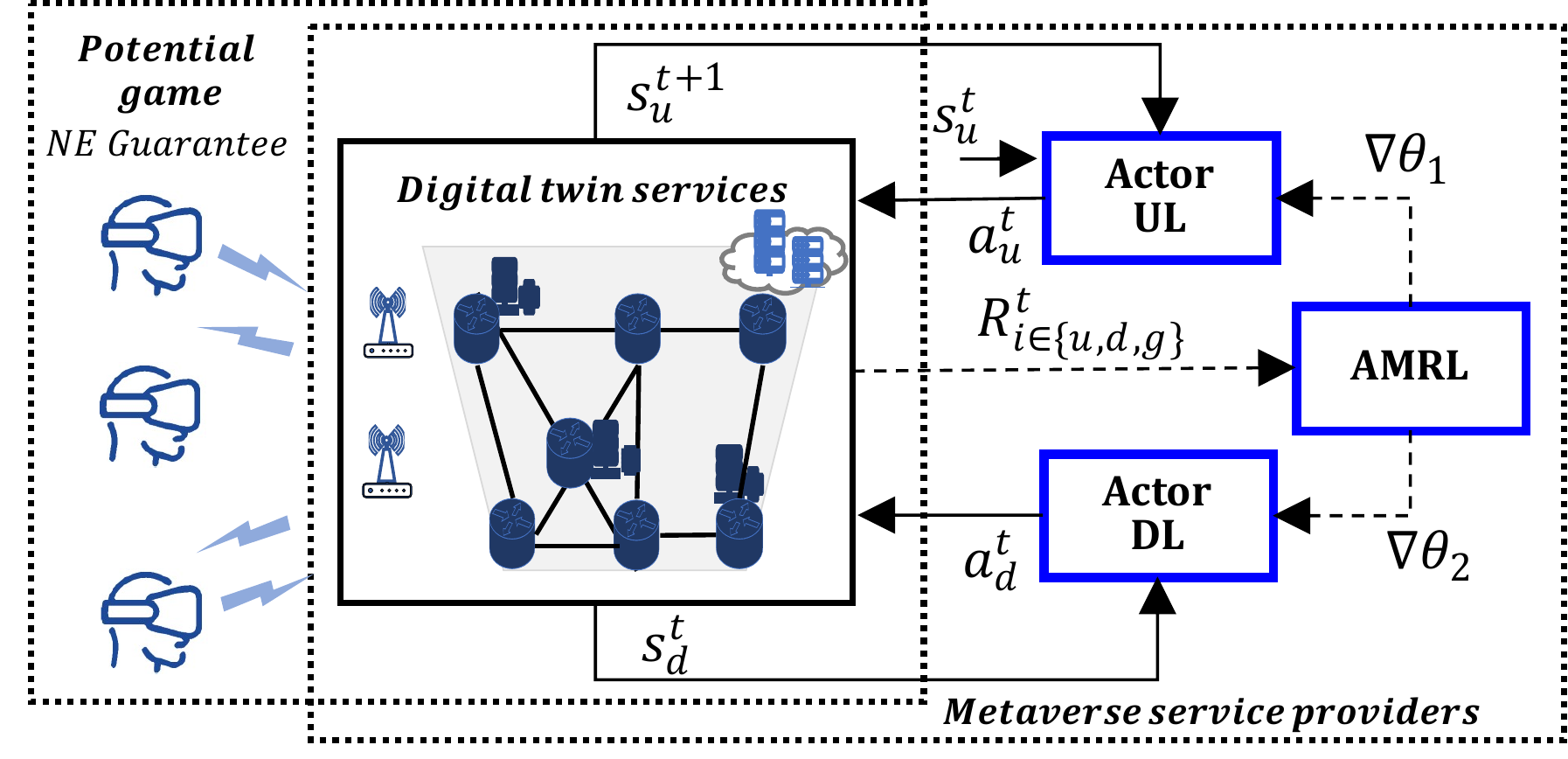}
  \caption{Nash-AMRL scheme for DT-assisted INC-E system}
  \label{fig:INC-E architecture}
\end{figure}

\begin{algorithm}[t]
\caption{Best-Response Algorithm for XUD Association and Task Splitting}
\label{alg:epg_user}
\begin{algorithmic}[1]
\State Initialize $s_m\leftarrow 0$, $\forall m\in\mathcal{M}$
\Repeat
    \State Receive $(\mathbf{o}^t,\mathbf{p'}^t)$ from the operator
    \For{each XUD $m\in\mathcal{M}$}
        \If{$o_m^t=0$}
            \State $s_m\leftarrow 0$
        \Else
            \State Compute $\mathcal{I}_m^t$ and $\omega_{m\Phi_L}^t$ from \eqref{eq:uplink_rate}
            \State Construct
            \[
            \mathcal{S}_m^t=\{k\in\{1,\ldots,K\}: T_m^{\mathrm{e2e}}(k)\le T_m^{\max}\}
            \]
            using \eqref{eq:unified_e2e_final}
            \For{each $k\in\mathcal{S}_m^t$}
                \State Solve $\mathcal{P}_{\widetilde U}$ for $(\lambda_{mk}^\star,\beta_m^\star)$ via \eqref{eq:lambda_update}--\eqref{eq:nu_update}
                \State Compute $U_m(k,s_{-m};\lambda_{mk}^\star,\beta_m^\star)$
            \EndFor
            \State $s_m^\star\leftarrow \arg\max_{k\in\mathcal{S}_m^t} U_m(k,s_{-m};\lambda_{mk}^\star,\beta_m^\star)$
            \If{$s_m^\star\neq s_m$}
                \State $s_m\leftarrow s_m^\star$
            \EndIf
        \EndIf
    \EndFor
\Until{no XUD changes its strategy}
\State \Return $(\mathbf{s}^\star,\lambda^\star,\beta^\star)$
\end{algorithmic}
\end{algorithm}

\subsection{ Multi-user interference-aware channel selection game}

Given the operator-predicted OFMO in each time slot, XR user devices (XUDs) act as followers within a Stackelberg framework. In MEC-only mode, all tasks are transmitted directly to the MEC server. In collaborative mode, XUDs select a wireless channel for each mapped INC node and may additionally determine a partial offloading ratio. UL interference, queueing delays, and INC processing loads depend on all collective choices of all users, thereby forming a decentralized interference-aware channel-selection game.

Let each XUD $m$ select a strategy $s_m = k_m \in \{0,1,\ldots,K\}$ where $k_m \in \{0,1,\ldots,K\}$ indicates the selected uplink channel (and associated
INC node), with $k_m=0$ corresponding to MEC-only offloading.

The operator does not participate in this game; it only determines whether the game exists by predicting the OFMO. When $o_m = \mathrm{MEC\!-\!only}$, XUD $m$ is excluded from the game.

For a user selecting channel $k_m$, the uplink SINR is $\gamma^t_{m\tilde{\Phi}_L}(\mathbf{p}^t,\mathbf{u}^t)$
yielding the finite-blocklength URLLC rate $\omega_{m\tilde{\Phi}_L}^t$. The utility function $U_m(s_m,s_{-m})$ follows the operator-assisted design and depends parametrically on the current OFMO and POAL decisions. Each XUD seeks the best response $s_m^* = \arg\max_{s_m} U_m(s_m,s_{-m})$ given the operator’s OFMO and the channel selections of other users.

From the SINR expression, the effective interference on channel $k$ is $\mathcal{I}_{m}(k)
=
\sum_{n \neq m:\, k_n = k}
p_n 
\frac{|h_{m}^H h_{n}|^2}{\|h_m\|^2}$. A larger $\mathcal{I}_{m}(k)$ decreases $\omega_{m\tilde{\Phi}_L}$ and increases both the UL latency and the INC queue. Thus XUD $m$ prefers channel $k$ only if
\begin{equation}
T_m^{\mathrm{e2e}}(k) < T_m^{\mathrm{non}}
\quad\Longleftrightarrow\quad 
\mathcal{I}_m(k) \le V_m,
\label{equ:interf}
\end{equation}

\noindent $V_m$ denotes a user-specific interference tolerance threshold derived from latency constraint $T_m^{max}$ is satisfied. Hence users self-organize away from overloaded channels, achieving decentralized interference-aware INC association without global coordination.

In other words, for a given task channel-selection strategy $s_m$, the XUD $k$ can increase system utility  when its received interferences satisfy inequality  ~\eqref{equ:interf}. The channel-selection interaction forms a discrete game
$G = \{\mathcal{M}, (S_m)_{m\in\mathcal{M}}, (U_m)_{m\in\mathcal{M}}\}$,
where $\mathcal{M}$ is the set of players (XUDs), $S_m$ is the strategy
space of XUD $m$, and $U_m$ is its computing utility. Based on \cite{pham2022partial}, the game $G$ is an exact potential game(EPG) with potential function $\phi(s)$ formulated as follows:

\begin{equation}
\begin{split}
\phi(s) &= \sum_{m\in\mathcal{M}} \Big[
s_{m0} R_{m0} \\
&\quad + (1 - s_{m0})\big(
\sum_{j\in\mathcal{K}} s_{mj} R_{mj}
+ \sum_{m'\neq m} R_{m'0}
\big)
\Big]
\end{split}
\end{equation}

where $R_{mj} =
g_t \Big[
\widetilde{T}_m^{O}
-
T_m^{\mathrm{e2e}}
\Big]
-
w_m E'^t(p'^t_m)$.
For ease of proof, the expression $\phi(s_m,s_{(-m)})$ is given as:
\begin{equation}
\begin{split}
\phi(s_m,s_{(-m)}) &= \sum_{m\in\mathcal{M}} \Big[
s_{m0} R_{m0} \\
&\quad + (1 - s_{m0})\big(
\sum_{j\in\mathcal{K}} s_{mj} R_{mj}
+ \sum_{m'\neq m} R_{m'0}
\big)
\Big]
\end{split}
\end{equation}

\textit{Remark 1:} The game $G$, with potential function $\phi(s)$, is an EPG and can reach a NE in a finite number of iterations.

Given a fixed channel/INC selection $k_m$ and interference profile, XUD $m$ locally chooses its partial offloading ratio $\lambda_{mk}\in \Phi_{P,m}$ and requested INC computing share $\beta_m$ to maximize utility under latency constraints. The operator enforces resource feasibility, while the utility function captures the user-side latency–energy trade-off guiding optimal task-splitting. Accordingly, XUD $m$ adjusts $(\lambda_{mk},\beta_m)$ to maximize utility subject to latency and resource constraints:

\begin{align}
\mathcal{P}_{\widetilde{U}}:~\max_{\lambda_{mk},\,\beta_m}\quad
& U_m(\lambda_{mk},\beta_m; s_m)
\label{prob:P1_obj}
\\[2pt]
\text{s.t.}\quad
& T_m^{\mathrm{e2e}}(\lambda_{mk},\beta_m; s_m)
\;\le\;
T_m^{\max}, \tag{18a}
\label{prob:P1_delay_constraint}
\\
& 0 \;\le\; \lambda_{mk} \;\le\; 1,
\quad
0 \;\le\; \beta_m \;\le\; 1. \tag{18b}
\label{prob:P1_box_constraints}
\end{align}

Under the latency model in~\eqref{eq:unified_e2e_final}, 
$T_m^{\mathrm{e2e}}(\lambda_{mk},\beta_m)$ is monotonically increasing and convex 
with the load variables, whereas $U_m(\lambda_{mk},\beta_m)$ is concave in 
$(\lambda_{mk},\beta_m)$ for a fixed channel selection $s_m$.
Hence, $\mathcal{P}_{\widetilde{U}}$ is a convex program that can be solved via
Lagrangian relaxation:
\[
\mathcal{L}_m
=
U_m
+ \theta_m\!\left(T_m^{\mathrm{e2e}} - T_{\max}\right)
+ \nu_m(\beta_m - 1),
\]
with primal updates

\begin{align}
\lambda_{mk}^{(t+1)}
&=
\left[
\lambda_{mk}^{(t)}
- \kappa_1
\frac{\partial\mathcal{L}_m}{\partial \lambda_{mk}}
\right]_{[0,1]}, \nonumber
\\
\beta_m^{(t+1)} 
&=
\left[
\beta_m^{(t)}
- \kappa_2
\frac{\partial\mathcal{L}_m}{\partial \beta_m}
\right]_{[0,1]},
\label{eq:lambda_update}
\end{align}

and dual updates
\begin{align}
\theta_m^{(t+1)}
&=
\left[
\theta_m^{(t)}
+ \kappa_3
\left(
T_m^{\mathrm{e2e}} - T_{\max}
\right)
\right]_{+},\nonumber
\\
\nu_m^{(t+1)} 
&=
\left[
\nu_m^{(t)}
+ \kappa_4
\left(
\beta_m - 1
\right)
\right]_{+}.
\label{eq:nu_update}
\end{align}

Equivalently, one may define a per-user cost function: 
$\mathcal{G}_m(\lambda_{mk},\beta_m; s_m) = -U_m(\lambda_{mk},\beta_m; s_m)$ 
and solve a minimization problem as in \cite{pham2022partial}. In this study, we adopt the utility-maximization form for consistency with our
game-theoretic framework

\textit{Remark 2 :}
 The convex continuous optimization preserves the equilibrium of the discrete game $G$ while improving user utility, consistent with  \cite{pham2022partial}.

 Based on Remark 1 and 2, we design a potential game-based Multi-user ICS algorithm to address the NE solution for all users. The detailed steps of ICS algorithm are summarized in Algorithm 1

\subsection{Markov decision process (MDP) formulation for OFMO and POAL}

Given the XUDs’ best response, we now optimize network parameters while accounting for stochastic user demands and dynamic network conditions. Limited INC node capacity and variable user demand imply that the OFMO policy should be guided by predicted user requests and resource availability. 
The operator’s problem is modeled as an asynchronous sequential decision-making process under uncertainty, with OFMO and POAL occurring in the UL and DL processes, respectively. We reformulate this problem into asynchronous UL and DL processes and employ AMRL online learning to approximate the optimal policies. 

\subsubsection{State} To mitigate erratic training caused by complex states, we consider only relevant variables that exhibit collection or sequential attributes, as follows:
\begin{enumerate}
    \item Uplink state $X_{u,m}^t$ - user task request, task input size, transmission power:
    \( X_{u,m}^t = (\mu_m, I^t_m, p_m)\)
    \item Downlink state $X_{d,m}^t$ - UL actions (user association), resultant data size from 3D rendering, task input size: \( X_{d,m}^t = (\mathbf{u}_m^t, I'^t_m, I^t_m) \)
    
\end{enumerate}
We denote, \( X_m^t = (X_{u,m}^t, X_{d,m}^t) \in \mathcal{X} \) be the local network state at XUD \(m \in \mathcal{M}\). Thus, the global state
\( X^t = (X_m^t)_{m \in \mathcal{M}} \in \mathcal{X}^{\mathcal{M}} \)
denotes the global state of the network.

\subsubsection{Action} The action of the UL agent ($\pi_u^t$) and  DL agent ($\pi_d^t$) are defined as follows:

\begin{enumerate}
    \item Uplink action $\pi_u^t$ - the action involves OFMO preference scores used to infer a feasible OFMO $(\mathbf{u}^{t+1})$. The agent’s objective is to balance the operator’s load at the INC: \(\pi_u^t=\{\mathbf{u}^t_k, \mathbf{u}^t_{1+k}\}  \forall k \in \mathcal{K}\)
    \item Downlink action $\pi_d^t$ - the action corresponds to determining the continuous POAL for each XUD: \(\pi_d^t=\{\mathbf{p'}_1^t, \mathbf{p'}_{2}^t, \dots \mathbf{p'}_M^t\}   \forall \mathbf{p'}^t \in [\mathbf{p'}_{min}, \mathbf{p'}_{max}]\)
\end{enumerate}

Thus, the operator's policy in a given time slot is denoted as \(\pi=\{\pi_u^t, \pi_u^t\}\).

\subsubsection{Reward} In our scenario, the operator aims to balance load and assign optimal power in an asynchronous system with asymmetric data sizes. Although each agent has distinct objectives, a global target exists that aligns the interests of both agents. Agent $1$ UL focuses on balancing the load based on the XUD game outcome while monitoring the UL data rate to ensure the DL agent can effectively manage the uploaded data. Agent $2$ DL aims to minimize latency while optimizing energy consumption. 
To coordinate these objectives, a unified global target is introduced as shared information for both agents. The rewards are defined as follows:

\begin{enumerate}
    \item UL reward $R_{u,m}^t$ -- Uploading efficiency metric where unbalanced load incurs a penalty of $-1$, balanced load yields a reward of $+1$, and that is unbalanced but within optimal capacity is assigned $0$. Therefore, the possible reward values are: $R_{u,m}^t=\{-1,0,1\}$.
    \item {DL reward $R_{d,m}^t$}- Represents energy penalty based on the agent’s energy expenditure. It is calculated as the average ratio of allocated power to maximum DL transmission power across all XUDs:  \( R_{d,m}^t = -\sum_{k=1}^{\mathcal{K}} \frac{\mathbf{p'}_{m}^t, \mathbf{p'}_{min}}{\mathbf{p'}_{max}, \mathbf{p'}_{min}}  \times 0.5\)
    \item GL reward $R_g^t$ -- Defined as the sum of the delay penalty $-1$ and the normalized total system utility $U_m^t$ achieved at time $t$:
    \( R_g^t = -1 + \sum_{m \in \mathcal{M}} U_m^t.\)
\end{enumerate}
Based on this formulation, we show that the problem constitutes an AMDP.

\begin{theorem}
Given the global state $X^t=(X_m^t)_{m\in\mathcal{M}}$ with
$X_m^t=(X_{u,m}^t,X_{d,m}^t)\in\mathcal{X}$, the operator policy
$\pi^t=\{\pi_u^t,\pi_d^t\}$ determines $(\mathbf{o}^t,\mathbf{p'}^t)$, and the instantaneous cost $C_\pi^t$ in \eqref{operator_cost_updated_final}. Then, $\mathcal{P}_O$ is an AMDP.
\end{theorem}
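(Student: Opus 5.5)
The plan is to check the defining ingredients of an asynchronous MDP one at a time. These are: a state space, a set of actions that are split into asynchronously applied components, a Markovian transition kernel, and a stage cost. After that, the long-term discounted objective in $\mathcal{P}_O$ is identified with the standard infinite-horizon discounted criterion.

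\emph{State space.} The state space is $\mathcal{X}^{\mathcal{M}}$, with $X^t=(X^t_m)_{m\in\mathcal{M}}$ and $X_m^t=(X_{u,m}^t,X_{d,m}^t)$ as defined above. Each component is bounded or finite: the request $\mu_m\in\{0\}\cup\mathcal{F}$, the input size $I_m^t$, the fixed uplink power $p_m$, the association $\mathbf{u}^t_m$, and the rendered size $I'^t_m=f_m(I_m^t)$.

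\emph{Asynchronous action structure.} The action set is $\mathcal{A}=\mathcal{A}_u\times\mathcal{A}_d$. The uplink component $\pi_u^t$ produces $\mathbf{o}^t$, and through the followers' best response it produces $\mathbf{u}^{t}$. The downlink component $\pi_d^t$ produces $\mathbf{p'}^t\in[p'_{\min},p'_{\max}]^M$. The two components act at different stages of the same slot. The DL agent observes $X_{d,m}^t$, and that observation contains the UL outcome $\mathbf{u}_m^t$. So the decision epochs are ordered UL before DL, and they are not simultaneous. I would formalize this by splitting slot $t$ into two sub-epochs, $t^u$ and $t^d$, and defining an augmented (interleaved) process. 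In that process only one agent acts at each sub-epoch, and the other agent's action is held fixed. This is the structure that defines an AMDP.

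\emph{Markov property.} This is the main step and the main obstacle. I need to show
\[
\Pr(X^{t+1}\mid X^t,\pi^t,X^{t-1},\ldots)=\Pr(X^{t+1}\mid X^t,\pi^t).
\]
I would factor the transition into three parts. First, the request process $\bm{\mu}_t$ is a first-order Markov chain by assumption, and it is independent of the actions. Second, the task sizes and channels are drawn per slot; I will assume they are i.i.d. or Markov given $\mu$, and I need to state this assumption explicitly. Third, the UL association $\mathbf{u}^t$ is a deterministic function of $(X^t_u,\mathbf{o}^t)$. This follows from Remark 1: the best-response dynamics of Algorithm~1 converge in finitely many steps to an NE of the exact potential game $G$. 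I will fix a deterministic tie-breaking rule so that this function is well defined. With these three parts, the DL state $X_d^t$ depends only on $\mathbf{u}^t$ and $I_m^t$, so each sub-epoch transition depends only on the current (augmented) state and the acting agent's action. The delicate point is that the NE may not be unique. Deterministic tie-breaking, or a fixed initialization $s_m\leftarrow 0$ as in Algorithm~1, makes the induced map measurable and history-independent.

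\emph{Stage cost and objective.} Given $(X^t,\mathbf{o}^t,\mathbf{p'}^t)$ and the induced $\mathbf{s}^t$, the stage cost $C^t_\pi$ in \eqref{operator_cost_updated_final} is a function of the current state and actions only. The latency $T^{\mathrm{e2e}}_m$ depends on $s_m^t$, $\mathbf{p}^t$, $\mathbf{p'}^t$ and $I^t_m$, and the energy $E'^t$ depends on $\mathbf{p'}^t$. This cost is bounded because powers, sizes and rates are bounded; the rates are bounded below by positivity under the feasibility constraint (12b). Since $\gamma<1$, the objective $\mathbb{E}_\pi[\sum_t\gamma^tC^t_\pi]$ is a well-defined discounted criterion. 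Constraints (14a)–(14b) restrict the admissible action sets state by state. Collecting these pieces gives the tuple $(\mathcal{X}^{\mathcal{M}},\mathcal{A}_u,\mathcal{A}_d,P,C,\gamma)$ with alternating agent updates, so $\mathcal{P}_O$ is an AMDP.
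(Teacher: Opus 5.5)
Your route matches the paper's proof. You split each slot into a UL sub-epoch followed by a DL sub-epoch and factor the transition kernel through the intermediate state (the paper's $X^{t+\frac12}$). You then attach the stage cost to the current state and actions, and read $\mathcal{P}_O$ as a discounted criterion. The paper simply asserts the Markov property, whereas you state the exogeneity of channels and task sizes as an explicit assumption. You also pin down the follower equilibrium by deterministic tie-breaking and a fixed re-initialization of Algorithm~\ref{alg:epg_user} in every slot. Those are what make the kernel well defined; a warm start would instead require carrying $\mathbf{s}^{t-1}$ in the state.

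\textbf{The misplaced step.} You put the followers' equilibrium into the intermediate state, as a deterministic function of $(X_u^t,\mathbf{o}^t)$ that the DL agent then observes. In the paper's Stackelberg timing, however, the XUDs respond to $(\mathbf{o}^t,\mathbf{p'}^t)$ jointly:
\begin{itemize}
\item Algorithm~\ref{alg:nash_ahmrl} broadcasts both before running Algorithm~\ref{alg:epg_user}.
\item $U_m$ depends on $\mathbf{p'}^t$ through the DL latency inside $T_m^{\mathrm{e2e}}$ and through the energy term.
\item The response also depends on the channel realization, which is not part of $X_u^t$.
\end{itemize}
So $\mathbf{s}^t$ is not available before the DL decision.

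\textbf{The repair} is the paper's own construction. Let $X^{t+\frac12}$ carry only the OFMO output of $\pi_u^t$; this is what $\mathbf{u}_m^t$ in $X_{d,m}^t$ denotes (``UL actions''). Then fold the tie-broken follower response into the second half-step kernel $\Pr(X^{t+1}\mid X^{t+\frac12},\pi_d^t)$ and into $C_\pi^t$. The response is still a function of the current state, both actions and fresh exogenous randomness, so the Markov property survives.

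\textbf{Boundedness.} Separately, your boundedness remark via (12b) does not hold as stated, for two reasons. For users with $o_m^t=0$, Algorithm~\ref{alg:epg_user} sets $s_m=0$ without checking the deadline. In addition, the finite-blocklength rate in \eqref{eq:uplink_rate} can be nonpositive at low SINR. The AMDP claim does not need bounded costs, so either drop that remark or state bounded costs as an assumption.
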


\textit{Proof:}
The operator's control policy is
$\pi^t=\{\pi_u^t,\pi_d^t\}$, where $\pi_u^t$ determines the OFMO vector
$\mathbf{o}^t$ and $\pi_d^t$ determines the downlink POAL vector $\mathbf{p'}^t$.
The operator aims to minimise the long-term discounted cost in $\mathcal{P}_O$:
\( 
J(X^t,\pi)=\mathbb{E}_{\pi}\!\left[\sum_{k=0}^{\infty}\gamma^{k}\,C_{\pi}^{t+k}\right],
\quad \pi^{*}=\arg\min_{\pi} J(X^t,\pi).
\)
Reward signals $(R_{u,m}^t, R_{d,m}^t, R_g^t)$ are for AMRL training, depend on the current state and actions, and do not affect state transitions. In each time slot, the uplink OFMO decision occurs first, producing an intermediate state that constrains the downlink POAL decision. Thus, the process evolves sequentially as
\(
X^{t} \xrightarrow{\ \pi_u^{t}\ } X^{t+\frac{1}{2}}
\xrightarrow{\ \pi_d^{t}\ } X^{t+1}.
\)
Accordingly, the transition kernel factorises as
\(
\Pr\!\left(X^{t+1}\mid X^{t},\pi_u^{t},\pi_d^{t}\right)
=
\sum_{x}
\Pr\!\left(X^{t+1}\mid X^{t+\frac12}=x,\pi_d^{t}\right)
\Pr\!\left(X^{t+\frac12}=x\mid X^{t},\pi_u^{t}\right).
\)
The stage rewards are generated as
$R_{u,m}^t=\mathcal{R}_{u}(X^t,\pi_u^t)$,
$R_{d,m}^t=\mathcal{R}_{d}(X^{t+\frac12},\pi_d^t)$, and
$R_g^t=\mathcal{R}_{g}(X^t,\pi_u^t,\pi_d^t)$.
Conditioned on the current state and sequential actions, the next state is independent of the past, so the process is an AMDP.

\subsection{Asynchronous Multiagent Reinforcement Learning (AMRL)}
\label{sub:AMRL}
\textbf{Novelty of our AMRL algorithm.} Our AMRL framework integrates a discrete-action UL actor, a continuous-action DL actor, and multiple hybrid-critic variants designed to handle the asymmetric decision-making inherent in INC-assisted offloading. Unlike AAHC ~\cite{yu2023asynchronous}, in which the UL agent directly selects the offloading mode, our UL actor predicts \emph{preference scores} from which a feasible OFMO is obtained via a knapsack solver. This approach aligns with our Stackelberg formulation, where XUDs determine their task split while the operator infers the optimal feasible offloading configuration. Furthermore, unlike AAHC, which employs a single multi-head critic without local critics, our AMRL framework investigates three critic architectures (AHMRL, MASC, AC) to examine how the critic design affects operator learning in in–network–edge collaboration scenarios

\subsubsection{Hybrid Critic}
\label{sec:critic_arch}

In the proposed AMRL framework, the operator must jointly optimize UL user association and DL power allocation while accounting for different reward semantics at each stage. To evaluate the impact of critic design on learning stability and convergence, we consider three value-function architectures:

\paragraph*{i) Asynchronous Hybrid Multiagent Reinforcement Learning (AHMRL)}
In contrast to AAHC~\cite{yu2023asynchronous}, AHMRL augments the multi-head hybrid critic 
with \emph{local UL and DL value networks}.  
This design enables each actor to obtain a stage-specific baseline while still benefiting from a 
shared global value estimate that captures UL--DL coupling.
The hybrid critic decomposes the value function into three components
$V_{\phi}^{u} = V_{\phi}(X_{u})$,
$V_{\phi}^{d} = V_{\phi}(X_{d})$ and
$V_{\phi}^{g} = V_{\phi}(\{X_{u};X_{d}\})$. The corresponding losses are
\begin{align}
L^{u}(\phi) &= (V_{\phi}(X_{u}^{t}) - A_{u}^{t} - V_{\phi'}(X_{u}^{t}))^{2},\nonumber\\
L^{d}(\phi) &= (V_{\phi}(X_{d}^{t}) - A_{d}^{t} - V_{\phi'}(X_{d}^{t}))^{2},\nonumber \\
L^{g}(\phi) &= (V_{\phi}(\{X_{u}^{t};X_{d}^{t}\}) - A_{g}^{t} - V_{\phi'}(\{X_{u}^{t};X_{d}^{t}\}))^{2}.
\end{align}
The hybrid critic is updated via
\begin{equation}
L(\phi) = w_{u}L^{u}(\phi) + w_{d}L^{d}(\phi) + w_{g}L^{g}(\phi),
\end{equation}
where $w_u,w_d  \text{ and }w_g$ are weighting coefficients.

\textit{ii) Multi-actor shared-critic (MASC)}. In this architecture, the UL and DL agents share a single critic \cite{xiao2023collaborative} \cite{yang2023decentralized} with no local critic for either agent. In this case, the value function is express as $ V_{\phi} ^{g} = V_{\phi} (\{X_{u}; X_{d}\})$.
The loss function is defined as
\begin{align} L^{g}(\phi) &= (V_{\phi} (\{X_{u}^{t};X_{d}^{t}\})-A_{g}^{t}{-} V_{\phi '}(\{X_{u}^{t};X_{d}^{t}\}))^{2}, \end{align}

The shared critic is updated according to this loss: 
\begin{equation}L(\phi) = L^{g}(\phi) \end{equation}

\textit{iii) Actor-critic (AC)}.
This architecture represents the prevailing approach in which agents act independently, with no shared critic. Each agent relies on its local critic to learn the optimal policy. The value functions are defined as  $V_{\phi} ^{u} = V_{\phi} (X_{u})$ and $V_{\phi} ^{d} = V_{\phi} (X_{d})$, respectively.

The losses for the three heads are given by
\begin{align} L^{u}(\phi) = (V_{\phi} (X_{u}^{t})-A_{u}^{t}{-} V_{\phi'}(X_{u}^{t}))^{2}, \nonumber\\ L^{d}(\phi) = (V_{\phi} (X_{d}^{t})-A_{d}^{t}{-} V_{\phi '}(X_{d}^{t}))^{2}. \end{align}

\subsubsection{Learning the Optimal Operator Policy}

In each episode, the system first encodes the UL and DL states into compact latent 
representations using a task-order independence layer (TIL), ensuring permutation-invariant 
processing of user-task arrivals.  
The UL actor observes the encoded state $X_{u}^{t}$ and outputs a preference vector 
$\pi_{u}^{t}$, receiving reward feedback $R_{u}^{t}$. The DL actor then observes $X_{d}^{t}$ and selects the continuous POAL $\pi_{d}^{t}$, 
receiving both $R_{d}^{t}$ and a global reward $R_{g}^{t}$.  
GAE is used to compute the advantages associated with UL, DL, and global returns, which are then used to update both actors and the critic according to the chosen architecture. This process continues until the end of the episode. While each agent is designed to achieve its own role-specific and global objectives, neither agent can observe the other’s role-specific objectives. Agent $1$ is rewarded for optimal user association in UL transmission, whereas Agent $2$ is rewarded for DL power selection.

\subsubsection{Discrete Action via Knapsack Optimization}
The first agent in our AMRL framework is an actor network whose output $ \mathbf{O} = [O_1, O_2, \ldots, O_M] $
represents OFMO \emph{preference scores}, i.e., continuous utility values indicating the benefit of INC-assisted mode for each user. These actor-generated utilities are used to determine a capacity-feasible offloading configuration.

Since INC nodes have limited processing capacity, the predicted preferences cannot be applied directly. Therefore, we formulate the OFMO selection at time~$(t+1)$ as a 0–1 knapsack problem:

\begin{align}
\widetilde{\mathcal{P}_O}:~\max_{\{b_m^{(t+1)}\}} ~~&\sum_{m=1}^{M} b_m^{(t+1)} O_m \\
\text{s.t.}\quad 
& \sum_{m=1}^{M} b_m^{(t+1)} \le C_{\max}, \tag{26a} \label{eq:constraint_22a} \\
& b_m^{(t+1)} \in \{0,1\} \tag{26b} \label{eq:constraint_22b}
\end{align}

\noindent where \(b_m^{(t+1)}\) denotes the binary OFMO decision and \(C_{\max}\) represents the INC capacity.

To solve this problem, we define a matrix \(\Xi(m,c)\) that represents the maximum utility archievable by the first \(m\) users under capacity \(c\):

\begin{equation}
\Xi(m,c)=\max\big\{\Xi(m-1,c),\; \Xi(m-1,c-1)+O_m \big\}.
\end{equation}

This recursion formulation, implemented in Algorithm~\ref{alg:ofmo_solver}, yields the optimal binary OFMO vector while ensuring that the INC capacity constraints are satisfied.

\subsubsection{Continuous Action via Sampling}

For the DL agent, the POAL decision is modeled as a continuous action.  
The continuous PPO actor outputs the parameters of a Gaussian distribution: a normalized mean
$\mu_\theta(X_d^t)$ (computed via a sigmoid layer) and a trainable log-standard deviation. An action is then
sampled as $ a_d \sim \mathcal{N}(\mu_\theta, \sigma_\theta^2),
$
clipped to the range $[0,1]$, and finally mapped to the valid DL power interval $\mathbf{p'}_m^t
= \mathbf{p'}_{\min} 
+ a_d \left(\mathbf{p'}_{\max}-\mathbf{p'}_{\min}\right).$
  This sampling mechanism ensures smooth exploration of the continuous POAL space while
adhering to practical DL transmission-power constraints \cite{chua2024play,yu2023asynchronous}.

\subsubsection{Asynchronous Actor Network Update}

The AMRL framework adopts a PPO-style policy update, where each actor is trained using a configuration-dependent advantage function. In the general case, the UL and DL policy gradients are given by

\begin{align*}
\Delta\theta_{1} &=
\mathbb{E}_{(s_{u}^{t},\pi_{u}^{t})\sim\pi_{\theta_{1}'}} 
\left[
\nabla_{\theta_{1}} f^{t}(\theta_{1},A_{u}^{t})
\right], \nonumber\\
\Delta\theta_{2} &=
\mathbb{E}_{(s_{d}^{t},\pi_{d}^{t})\sim\pi_{\theta_{2}'}} 
\left[
\nabla_{\theta_{2}} f^{t}(\theta_{2},A_{d}^{t})
\right], \tag{21}
\end{align*}

where the effective advantages $A_{u}^{t}$ and $A_{d}^{t}$ architecture of the critic network:

\begin{equation}
(A_{u}^{t},A_{d}^{t}) =
\begin{cases}
(\,A_{u}^{t} + A_{g}^{t},\; A_{d}^{t} + A_{g}^{t}\,), & \text{AHMRL}, \\[4pt]
(\,A_{g}^{t},\; A_{g}^{t}\,), & \text{MASC}, \\[4pt]
(\,A_{u}^{t},\; A_{d}^{t}\,), & \text{AC}. 
\end{cases}
\tag{23}
\end{equation}

The advantages are computed via generalized advantage estimation (GAE).  
For the UL stage:

\begin{align*}
A_{u}^{t} &= 
\delta_{u}^{t} + (\gamma\lambda)\delta_{u}^{t+1} 
+ \cdots 
+(\gamma\lambda)^{\bar{T}-1}\delta_{u}^{t+\bar{T}-1}, \tag{22}\\
\delta_{u}^{t} &= 
R_{u}^{t} + 
\gamma V_{\phi'}(X_{u}^{t+1}) - V_{\phi'}(X_{u}^{t}). \tag{23}
\end{align*}

Analogous expressions hold for $A_{d}^{t}$ and $A_{g}^{t}$, where the value
function $V(\cdot)$ is provided by either the local critic, the global critic,
or both, depending on the selected hybrid-critic configuration.

\begin{algorithm}[t]
\caption{MultiINC OFMO via 0--1 Knapsack}
\label{alg:ofmo_solver}
\begin{algorithmic}[1]

\Statex \textbf{Input:} Preference scores $\mathbf{O}=[O_1,\dots,O_M]$, capacity $C_{\max}$
\Statex \textbf{Output:} OFMO decision $\mathbf{b}^{t+1}$

\State Initialize $\Xi(0,c)=0$, $\forall c\in[0,C_{\max}]$

\For{$m=1$ to $M$}
    \For{$c=0$ to $C_{\max}$}
        \State $\Xi(m,c)=\max\{\Xi(m-1,c),\, \Xi(m-1,c-1)+O_m\}$
        \State $\Xi_r(m,c)=\arg\max_{a\in\{0,1\}}\{\Xi(m-1,c-a)+aO_m\}$
    \EndFor
\EndFor

\Statex \textbf{Backtracking:}
\For{$m=M$ downto $1$}
    \State $b_m^{t+1}=\Xi_r(m,C_{\max})$
    \State $C_{\max}\leftarrow C_{\max}-b_m^{t+1}$
\EndFor

\State \Return $\mathbf{b}^{t+1}$

\end{algorithmic}
\end{algorithm}

\begin{table}[ht]
\centering
\caption{Simulation Parameters}
\label{tab:sim_params}
\begin{tabular}{ll}
\hline
\textbf{Parameter} & \textbf{Value} \\
\hline
MEC computing capacity $f^{em}$ & 30 GHz \\
INC computing capacity $f^{cn}_k$ & [1, 9] GHz \\
INC association capacity & 5 users per node \\
Task input size $I^t_m$ & [1, 5] \\
Task computation load $C_m$ & [1, 5] \\
Task latency bound $T_m^{\max}$ & [5, 15] ms \\
DT discrepancy (INC/MEC) & 0.3 \\
Replay memory size & 10000 \\
Batch size & 32 \\
Discount factor $\gamma$ & 0.9 \\
Zipf parameter $\delta$ & 0.7 \\
No-request probability $R$ & 0.1 \\
Neighborhood size $N$ & 3 \\
\hline
\end{tabular}
\end{table}

\begin{algorithm}[t]
\caption{Nash-AMRL for Joint Operator Learning and XUD Response}
\label{alg:nash_ahmrl}
\begin{algorithmic}[1]
\State Initialize operator policies $(\pi_u,\pi_d)$ and XUD strategies via Algorithm~\ref{alg:epg_user}
\For{each episode}
    \For{each time slot $t$}
        \State Observe state $X^t$
        \State Update UL policy $\pi_u^t$ and infer OFMO $\mathbf{o}^t$
        \State Update DL policy $\pi_d^t$ and obtain POAL $\mathbf{p'}^t$
        \State Broadcast $(\mathbf{o}^t,\mathbf{p'}^t)$ to all XUDs
        \State Run Algorithm~\ref{alg:epg_user} to update XUD association and task splitting
        \State Collect rewards $(R_u^t,R_d^t,R_g^t)$ and next state $X^{t+1}$
        \State Update actor(s) and critic using PPO and GAE
    \EndFor
\EndFor
\State \Return $(\mathbf{s}^\star,\pi_u^\star,\pi_d^\star)$
\end{algorithmic}
\end{algorithm}

\section{Numerical Results}
\label{NumericalResults}
This section evaluates the performance of the proposed Digital Twin-assisted Nash-AMRL framework from both the user-oriented and operator-oriented perspectives. We benchmark the proposed MARL-based methods against heuristic policies using multiple performance metrics. Unless otherwise stated, the primary simulation parameters are summarized in Table~\ref{tab:sim_params}.

\subsection{Baseline}

This section evaluates the proposed DT-assisted Nash-AMRL framework against both learning-based variants and heuristic baselines. 

\subsubsection{Proposed Method and its variants}
The whole design integrates (i) UL/DL encoders, (ii) the hybrid critic with
UL–DL–global value heads, (iii) the knapsack-based OFMO module, and (iv) continuous
DL power control. This configuration serves as the primary algorithm under evaluation. To examine the impact of different critic structures, we implement three AMRL variants (AHMRL, MASC, AC) that support the proposed OFMO mechanism and a continuous DL action space, as discussed in \ref{sub:AMRL} and illustrated in Fig~\ref{fig:AMRL_variant}.

\subsubsection{Heuristic Baselines}
We also compare against non-learning heuristic strategies to benchmark the
performance of the proposed MARL framework.

\begin{itemize}
    \item \textit{Game-Random (GM-RN):} A randomized heuristic lower bound in which the OFMO and POAL are assigned randomly  \cite{cui2021reinforcement,pham2022partial}.

    \item \textit{Equal Policy:} Fixed 50\% collaborative offloading with uniform
    DL power allocation.

    \item \textit{Proportional Policy:} Offloading probability and DL power are
    assigned proportionally to INC resource availability and channel gain.
\end{itemize}

\begin{figure}[t]
  \centering
  \begin{subfigure}{0.52\columnwidth}
    \centering
    \includegraphics[width=\linewidth]{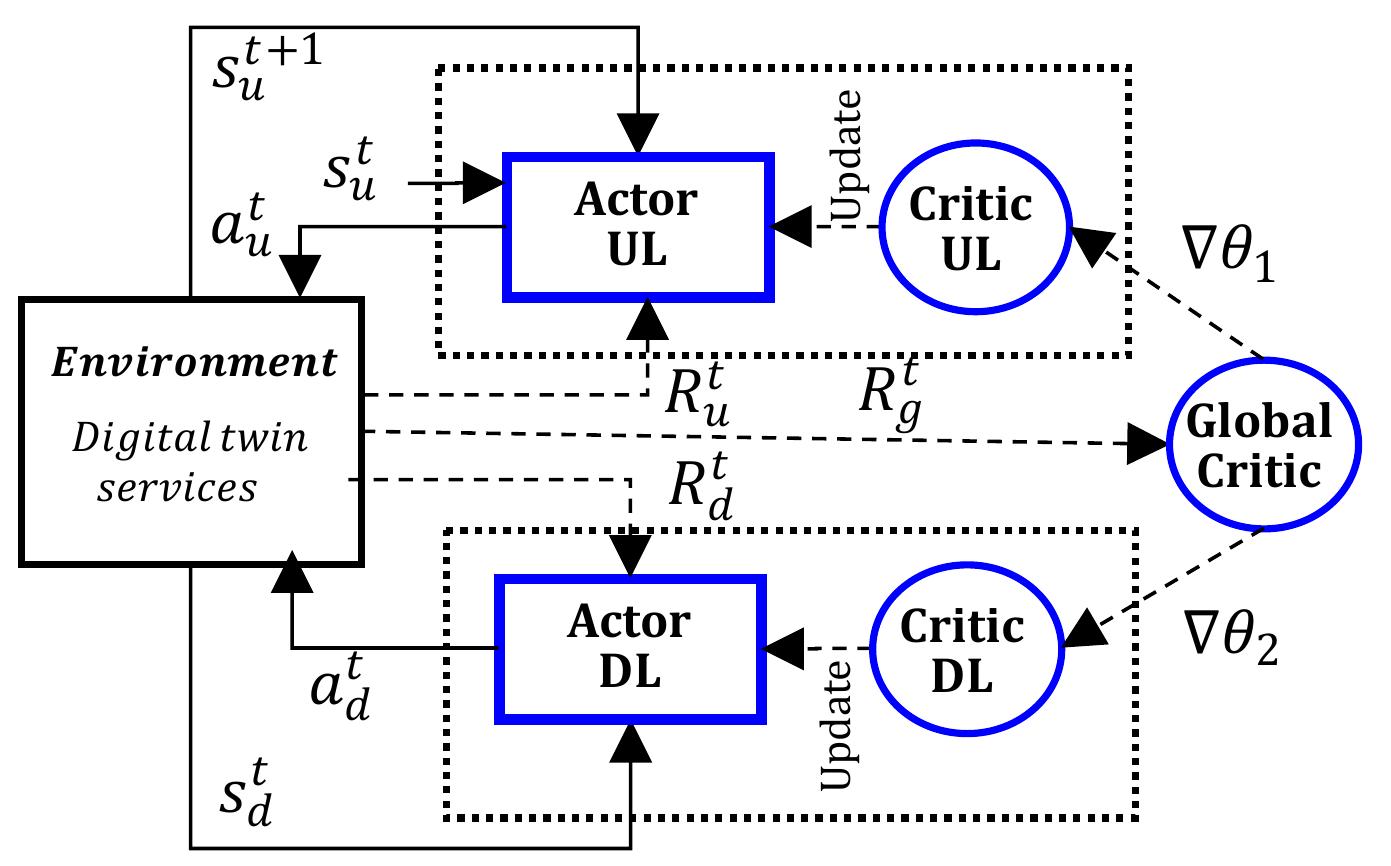}
    \caption{}
    \label{fig:AMRL_arch_AHMRL}
  \end{subfigure}
  \hfill
  \begin{subfigure}{0.42\columnwidth}
    \centering
    \includegraphics[width=\linewidth]{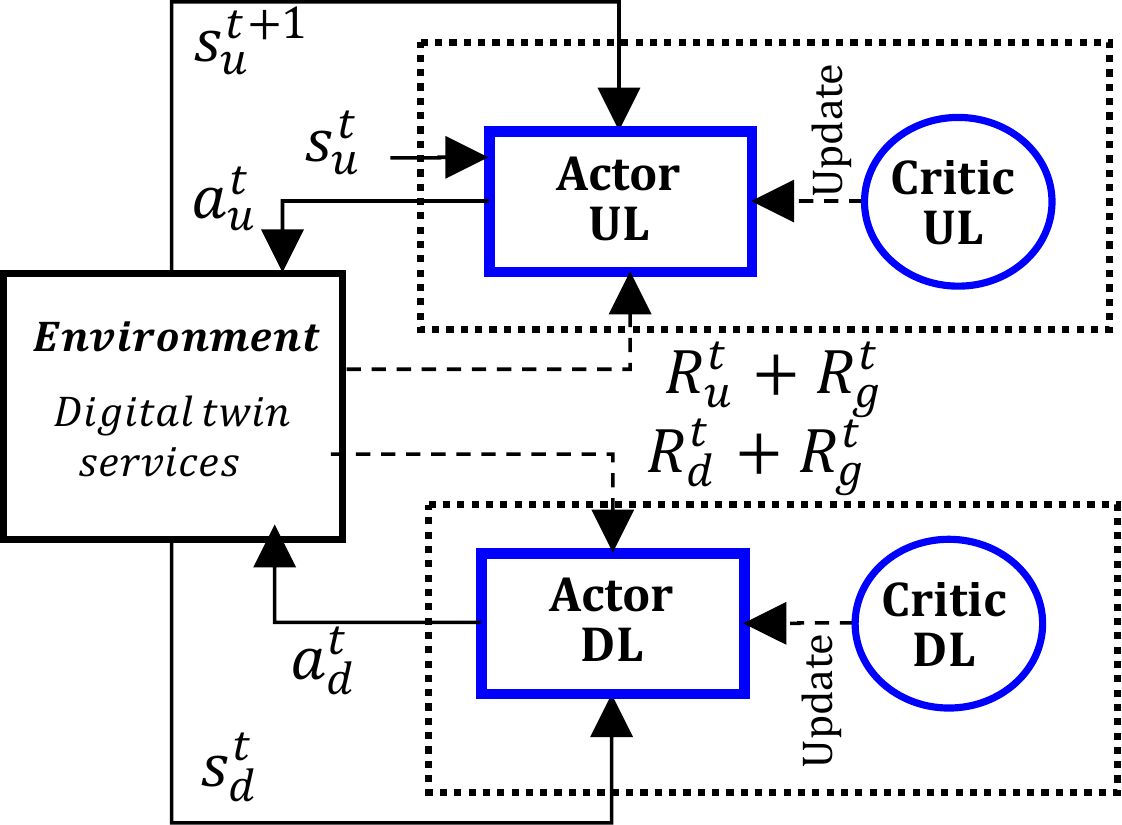}
    \caption{}
    \label{fig:AMRL_arch_MASC}
  \end{subfigure}

  \vspace{4pt} 

  \begin{subfigure}{0.45\columnwidth}
    \centering
    \includegraphics[width=\linewidth]{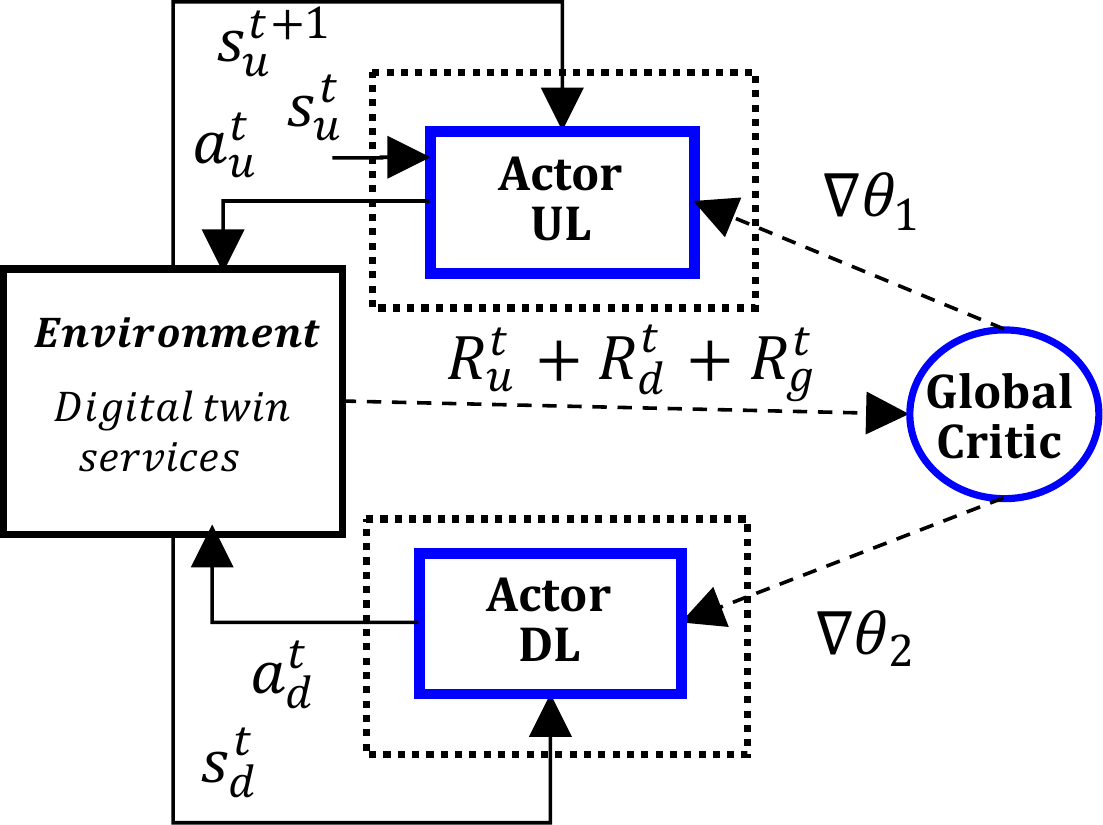}
    \caption{}
    \label{fig:AMRL_arch_AC}
  \end{subfigure}

  \caption{AMRL architectures: (a) Asynchronous Hybrid Multiagent Reinforcement Learning (AHMRL), (b) Actor--critic (AC), and (c) Multi-actor shared-critic (MASC).}
  \label{fig:AMRL_variant}
\end{figure}

\subsection{Metrics (KPIs)}
We evaluate the Nash-AMRL framework using key performance indicators (KPIs) covering user, operator, and system-level metrics.

First, we measure the (i) \textbf{utility} of XUDs, which indicates convergence of the multi-user offloading game across hybrid critic settings. We also report (ii) \textbf{maximum uplink rate} among XUDs as a measure of UL agent channel-selection efficiency, since uplink contention affects downlink rendering performance. 
Because the operator’s policy is learned via reinforcement learning, we further assess (iii) \textbf{uplink reward}, (iv) \textbf{downlink reward}, and (v) \textbf{global reward}, corresponding to the hybrid-critic components. These metrics quantify the performance of UL and DL agents in user association and power allocation.
To assess XR task performance, we measure (vi) \textbf{end-to-end latency} and (vii) \textbf{energy} consumption for digital-twinning, both of which are critical for efficient UL–DL coordination and sustainable system operation.

Furthermore, to benchmark proposed MARL architectures against heuristics baselines, we adopt a system-level (viii) \textbf{performance gain} metric: Let $C(\mathcal{A})$  denote the weighted latency–energy cost under architecture $\mathcal{A}$. The performance gain is defined as
$   \mathrm{PG}(\mathcal{A}) = \frac{C(\Pi^{\text{ref}})}{C(\mathcal{A})}
$, where $\Pi^{\text{ref}}$ denotes the heuristic reference policy. The cost function combines latency and energy consumption as $C = w_{\ell}\frac{T_{\mathrm{non}}}{T_{\mathrm{e2e}}}
      + w_{e}\frac{E_{\mathrm{non}}}{E}.$
A value $\mathrm{PG}(\mathcal{A}) > 1$ indicates that architecture $\mathcal{A}$ outperforms the heuristic baseline. This metric is used in the CDF plots of the combined cost gain.

\subsection{Experimental Settings}

This section presents the numerical results and analysis obtained from our simulations, which evaluate the performance of the proposed method. We consider INC-E networks comprising six XUDs randomly distributed within $200 \, \text{m} \times 200 \, \text{m}$ area, four INC nodes, and one edge server (ES). The large-scale fading from the $m$-th XUD to the $k$-th access point (AP) is modeled as $g_m = 10^{\left(\frac{PL(d_{mk})}{10}\right)}$, where the path loss is given by $PL(d_{mk}) = -35.3 - 37.6 \log_{10}(d_{mk})$ \cite{van2022urllc}. The noise spectral density is set to $-174 \, \text{dBm/Hz}$ \cite{nasir2020resource}, the system bandwidth is $10 \, \text{MB}$, and the URLLC decoding error probability is $\epsilon = 10^{-9}$. 
Following \cite{yu2023asynchronous}, the data augmentation function $f_m(\cdot)$ is implemented as a proportional function with slope $q^t_m$ (i.e., $I'^t_m = q^t_m \times I^t_m $), where  $q^t_m$ is uniformly drawn from the interval $[1,10]$. The choice of $q^t_m$ is motivated by advances in 3D reconstruction techniques, such asNeuralRecon \cite{sun2021neuralrecon} and Monster Mash \cite{dvorovzvnak2020monster}. The downlink transmit power range $[\mathbf{p'}_{min}, \mathbf{p'}_{max}]$ is set to $[0,20]$W. Additional simulation parameters are summarized in Table I.

Similar to \cite {chen2022dynamic}, \cite{qian2020reinforcement} and \cite{sun2018joint}, we assume that all users' task request transition probabilities follow the same model, parameterized by \( R \), \( \delta \), and \( N \):

\[
\Pr[\mu_k^{(t+1)} = j \mid \mu_k^{(t)} = i] =
\]
\[
\begin{cases} 
R, & \text{if } i \in \mathcal{F} \setminus \{0\},\; j = 0, \\[2pt]
(1-R) \dfrac{1}{\sum_{m=1}^{|\mathcal{F}|} \frac{1}{m^\delta}}, 
& \text{if } i = 0,\; j \in \mathcal{F}, \\[4pt]
(1-R)\dfrac{1}{N}, 
& \begin{aligned}[t]
\text{if } i \in \mathcal{F},\; j = (i+q)\bmod(|\mathcal{F}|+1),\\
q \in \{1,2,\ldots,N\},
\end{aligned} \\[6pt]
0, & \text{otherwise}.
\end{cases}
\]

Here,  \( R \) denotes the probability that no task is requested, and a Zipf distribution with parameter \( \delta \) models the probability of requesting any task  \( f \in \mathcal{F} \) when no current request exists. The transition probability to a neighboring task is uniformly distributed across \( N \) tasks. Although these transition probabilities are used to generate dynamic task requests in simulation, the proposed model-free solution does not rely on this information. As in \cite{chen2022dynamic}, we further demonstrate that our framework remains effective under various transition probability settings.

\begin{table}[!t]
\centering
\caption{Normalized AUC across Reward and Rate Metrics (Scenario $K$-$F$-$U$; Task Size 10, 30; $\delta \in \{0.7, 1.3\}$)}
\label{tab:normalized_auc_summary}

\resizebox{0.5\textwidth}{!}{   
\begin{tabular}{lcccccc}
\toprule
Scenario & $\delta$ & DL\ AUC & UL\ AUC & GL\ AUC & ULrate\ AUC & Util\  AUC \\
\midrule
\multicolumn{7}{c}{\textbf{AHMRL}} \\
\cmidrule(lr){1-7}
4-30-10 & 0.7 & 0.78 & 0.66 & 0.23 & 0.00 & 0.97 \\
4-30-10 & 1.3 & 0.65 & 0.64 & 0.91 & 0.74 & 0.08 \\
4-30-30 & 0.7 & 0.91 & 0.83 & 0.52 & 0.43 & 1.00 \\
4-30-30 & 1.3 & 0.62 & 0.52 & 0.31 & 0.99 & 0.81 \\
\textbf{Average} & -- & \textbf{0.74} & \textbf{0.66} & \textbf{0.49} & \textbf{0.54} & \textbf{0.72} \\
\midrule
\multicolumn{7}{c}{\textbf{MASC}} \\
\cmidrule(lr){1-7}
4-30-10 & 0.7 & 0.99 & 0.76 & 0.20 & 0.24 & 0.88 \\
4-30-10 & 1.3 & 0.99 & 1.00 & 1.00 & 0.48 & 0.17 \\
4-30-30 & 0.7 & 0.00 & 0.00 & 0.38 & 0.45 & 0.94 \\
4-30-30 & 1.3 & 0.51 & 0.48 & 0.29 & 1.00 & 0.77 \\
\textbf{Average} & -- & \textbf{0.62} & \textbf{0.56} & \textbf{0.47} & \textbf{0.54} & \textbf{0.69} \\
\midrule
\multicolumn{7}{c}{\textbf{AC}} \\
\cmidrule(lr){1-7}
4-30-10 & 0.7 & 0.06 & 0.02 & 0.00 & 0.44 & 0.86 \\
4-30-10 & 1.3 & 0.26 & 0.23 & 0.84 & 0.96 & 0.00 \\
4-30-30 & 0.7 & 1.00 & 0.86 & 0.53 & 0.30 & 0.98 \\
4-30-30 & 1.3 & 0.84 & 0.65 & 0.40 & 0.95 & 0.85 \\
\textbf{Average} & -- & \textbf{0.54} & \textbf{0.44} & \textbf{0.44} & \textbf{0.66} & \textbf{0.67} \\
\bottomrule
\end{tabular}
}
\end{table}

\subsection{Results}
We first analyze the transition probabilities to evaluate model performance under dynamic user requests. Table~\ref{tab:normalized_auc_summary} summarizes the normalized AUC results for reward, rate, and utility metrics under varying task loads and heterogeneity levels ($\delta=0.7,1.3$). The AUC metrics reflect cumulative convergence quality and overall policy effectiveness. 

AHMRL exhibits better and balanced coordination, achieving high DL and UL AUC (up to 0.91 and 0.83) and near-optimal utility (up to 1.00) at $\delta=0.7$, while shifting toward higher UL-rate efficiency (up to 0.99) at $\delta=1.3$. MASC remains competitive, attaining peak coordination with UL AUC = 1.00 and GL AUC = 1.00 at $\delta=1.3$, but shows reduced performance under higher task load at $\delta=0.7$. AC achieves the highest UL-rate AUC (up to 0.96), indicating strong transmission efficiency, but exhibits weaker and less consistent performance in reward-based metrics, particularly under lighter workloads. Therefore, all subsequent experiments adopt $\delta=0.7$ and $F=30$ users to ensure balanced and fair evaluation.

\subsubsection{Training performance} We evaluate the proposed framework using UL reward, DL reward, GL reward, utility, uplink rate, and energy consumption. All architectures converged within a few episodes. Because the GL reward is penalty-based, MASC initially exhibits a performance drop before converging to AHMRL (see Fig.~\ref{fig:train_results}a). 
The global critic in AHMRL helps maintain higher GL rewards by facilitating more effective learning across agents. For UL reward (Fig.~\ref{fig:train_results}b), AHMRL improves steadily despite a low initial value, whereas the other architectures begin to overfit after approximately 40 episodes. DL rewards generally increase during training (Fig.~\ref{fig:train_results}c); however, all architectures show some signs of overfitting around 90 episodes. 
AHMRL benefits from the global critic through enhanced knowledge sharing, while MASC learns more slowly due to reliance on a single shared critic. AC agents, despite lacking global information exchange, can still maximize their individual rewards.
Regarding overall system utility, AHMRL outperforms the other architectures for most of the training period; however, MASC surpasses it after approximately 100 episodes (Fig.~\ref{fig:train_results}d). This observation indicates that a shared global critic plays a critical role in facilitating knowledge sharing and improving overall utility in asynchronous learning. 
The maximum UL rate per episode (Fig.~\ref{fig:train_results}e) is highest for AHMRL during most of the training, with MASC and AC achieving competitive performance. Nevertheless, AHMRL’s high UL rate results in increased system energy consumption (Fig.~\ref{fig:train_results}f), revealing a clear trade-off between UL throughput and energy usage. 
AHMRL also achieves low latency up to approximately 50 episodes (Fig.~\ref{fig:train_results}g), after which AC demonstrates superior latency performance.
Relying on the OFMO and POAL decisions generated by the agents, we analyze the convergence behavior of the multi-user offloading game. As shown in Fig.~\ref{fig:train_results}h, the game rapidly converges to a stable point, corresponding to the NE. 
Moreover, AHMRL attains the highest utility, followed by MASC and AC, respectively. These results indicate that user utility is considerably enhanced by the operator’s agent, which consistently outperforms random assignment strategies.

\begin{figure}[t]
  \centering
  \includegraphics[width=0.45\textwidth]{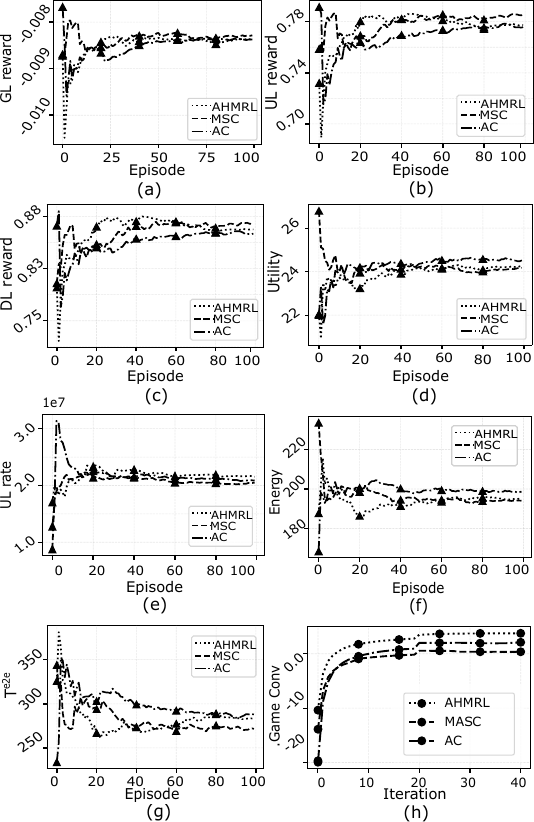}
  \caption{Training for different models and metrics (a) GL reward (b) UL reward (c) DL reward (d) Utility (e) UL rate (f) Energy (g) End-to-end Latency (h) Game utility convergence.}
  \label{fig:train_results}
\end{figure}

\subsubsection{User-Oriented Evaluation}

Under increasing user demand and task heterogeneity, we categorize subtasks as either data-intensive ($\psi{=}1,2$) or compute-intensive ($\psi{=}3,4$). For data-intensive tasks, input sizes $I^t_m$ are randomly selected from $[50, 150]$ megabits, while computation loads $C_m$ are drawn from $ [0.1,0.5]$ megacycles. 
For compute-intensive tasks, smaller input sizes of $[1,4]$ MB are paired with larger software volumes of $[1,2]$ Gcycles. AHMRL achieves the highest global reward for data-intensive workloads, outperforming MASC and AC by 6.24\% and 4.07\% in  $\psi= 1$ and $\psi= 2$, respectively. In compute-intensive scenarios, its performance degrades, with the global reward decreasing by 9.17\% and 5.71\% for  $\psi= 4$ relative to MASC and AC, respectively. 
AHMRL also demonstrates superior uplink scalability, achieving a 1.90\% higher UL rate in Task 1 and a 0.73\% higher UL reward in Task 3 compared with MASC, while offering only marginal improvements over AC, except at  $\psi= 4$ (2.15\%). Energy-delay trade-offs remain limited: AHMRL’s energy variation ranges from 0.04\% to +1.33\%, and it reduces the end-to-end latency $T_m^{\mathrm{e2e}}$ by 3.73\% relative to MASC at $\psi= 3$. However, it incurs a higher delay at $\psi= 1$. 
Overall, AHMRL excels in communication-heavy workloads, whereas MASC is more effective in compute-intensive, high-density user scenarios. Although energy consumption and latency increase with the number of users for all models, AHMRL exhibits the flattest delay growth under balanced task conditions, making it well-suited for scalable deployments.

\begin{figure}[t]
  \centering
  \includegraphics[width=0.47\textwidth]{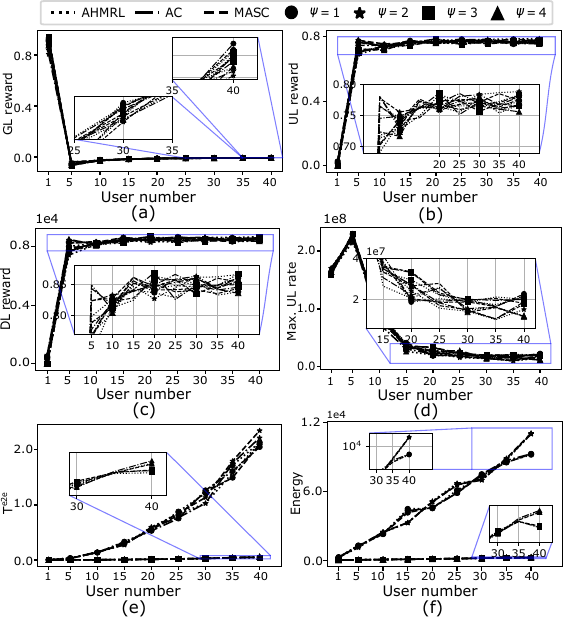}
  \caption{Influence of task type vs  number of UE}
  \label{fig:user_results}
\end{figure}

\subsubsection{Operator-Oriented Evaluation}
AHMRL consistently achieves the highest global reward under small INC $K$ configurations, outperforming MASC by +33.91\% and AC by +38.16\% at $K = 2$. However, AHMRL underperforms AC at $K = 1$ by –39.72\%, indicating suboptimal coordination when INC resources are minimal. 
For DL reward, AHMRL exhibits modest improvements over AC and MASC (up to +0.95\%), except at $K = 3$, where MASC outperforms AHMRL by +1.14\%. Regarding UL performance, AHMRL improves the UL rate over AC by 3.22\% at $K = 1$ and increases the UL reward by 0.75\% compared with MASC at $K = 2$. MASC generally maintains stronger UL performance under larger INC sizes (e.g., – 1.07\% UL reward gap at $K = 3$). 
In terms of latency, AHMRL reduces $T_m^{\mathrm{e2e}}$  by 3.04\% at $K = 4$ relative to MASC, but lags AC by 1.29\% at $K = 1$. For energy consumption, AHMRL achieves +0.66\% savings over AC at $K = 2$, but consumes more energy than MASC at large $K$.
 Overall, AHMRL performs best in coordinated scenarios with moderate INC sizes, MASC is more effective for larger $K$, and AC remains competitive when $K$ availability is minimal.

\begin{figure}[t]
  \centering
  \includegraphics[width=0.47\textwidth]{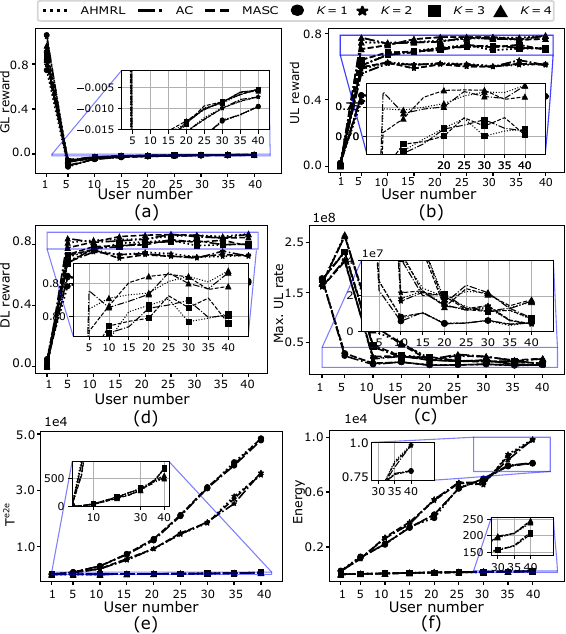}
  \caption{Influence of INC number vs  number of UE}
  \label{fig:user_results}
\end{figure}

\begin{figure}[t]
  \centering
  \includegraphics[width=0.47\textwidth]{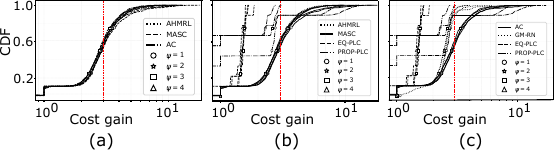}
  \caption{CDFs of user-level cost gains under different learning-based and heuristic policies}
  \label{fig:cdf_marl_user}
\end{figure}
\subsubsection{Robustness of Nash-MARL over Heuristic Baselines}
We evaluate robustness and fairness by analyzing the CDFs of user-Level and operator-level cost gains under different coordination settings.
As shown in Fig.~\ref{fig:cdf_marl_user}(a), AHMRL exhibits the steepest user CDF, with more than 85\% of users achieving at least a 2.0 cost gain, compared to 70\% for GM-RN and fewer than 60\% for AC. This indicates a more consistent user experience under AHMRL. Compared with policy-based heuristics (Fig.~\ref{fig:cdf_marl_user}(b) and ~\ref{fig:cdf_marl_user}(c)), over 90\% of users achieve cost gains of 1.5 or more with AHMRL. In contrast, fewer than 50\% do so with EQ-PLC and about 30\% with PROP-PLC, demonstrating AHMRL’s superior consistency and robustness.
On the operator side (Fig.~\ref{fig:results_CDF_operator}), AHMRL and GM-RN achieve cost gains above 2.0 in more than 80\% of scenarios, while heuristic baselines rarely reach this level.
Collectively, AHMRL provides higher and more uniform gains at both user and operator levels, validating its suitability for fairness- and reliability-critical metaverse orchestration. These results further confirm that MARL-based architectures, particularly AHMRL, offer superior performance, fairness, and robustness for equitable orchestration in dynamic metaverse environments.

\begin{figure}[t]
  \centering
  \includegraphics[width=0.47\textwidth]{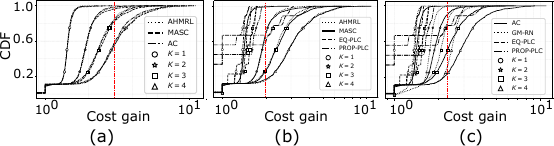}
  \caption{CDFs of operator-level cost gains under different learning-based and heuristic policies}
  \label{fig:results_CDF_operator}
\end{figure}

\subsection{Discussion}
Although AHMRL adopts a more decomposed architecture, MASC achieves comparable performance. This indicates that the global reward signal is already sufficiently informative, thereby reducing the marginal benefit of additional critic heads. Moreover, while AHMRL introduces stage-specific UL and DL value heads to enable finer-grained credit assignment, empirical results show that MASC remains highly competitive. We attribute this behavior to the characteristics of the INC-assisted MEC environment, in which the global reward signal largely dominates the optimization process. Consequently, the additional decomposition provided by AHMRL does not always yield considerable performance gains; in some scenarios, the simpler single-head MASC critic offers greater learning stability.

\textbf{Computational Complexity.} We analyze computational complexity for both training and execution. In collaboration, XUDs participate in an interference-aware channel selection to converge to a NE. Let $T_{\mathrm{NE}}$ denote the number of iterations required for convergence to a NE, and $T_{\mathrm{LR}}$ the number of Lagrangian refinement iterations for optimizing
$(\lambda_{mk},\beta_m)$. Since each of the $M$ users evaluates at most $K$
channel/INC options per iteration, the overall user-side computational cost is $\mathcal{C}_{\mathrm{XUD}} =
O\!\left(T_{\mathrm{NE}} \times M \times K \times T_{\mathrm{LR}}\right)$, matching the standard complexity of potential-game-based offloading schemes \cite{pham2022partial}.

At each time slot, the operator executes three lightweight online steps. First, the UL actor performs a forward inference to generate $M$ preference
scores, with complexity $O(d^{\mathrm{UL}} U^{\mathrm{UL}}_{1})$. Second, these scores are passed to a knapsack-based OFMO feasibility solver,
whose dynamic programming recursion requires evaluates $M$ users over the INC capacity dimension $C_{\max}$, yielding complexity $O(2 M C_{\max})$.
Finally, the DL actor performs a forward inference to produce complex continuous POAL actions
$O(d^{\mathrm{DL}} U^{\mathrm{DL}}_{1})$.
Hence, the total online execution cost per slot is
$\mathcal{C}_{\mathrm{online}}
=
O\!\left(
d^{\mathrm{UL}} U^{\mathrm{UL}}_{1}
+
d^{\mathrm{DL}} U^{\mathrm{DL}}_{1}
+
2 M C_{\max}
\right)$, which is suitable for real-time operation.

Offline training dominates computational cost. Each gradient update includes two actor updates and one critic update. For a mini-batch size $B$, the per-update complexity is
\begin{align*}
O\Bigg(
B \sum_{i \in \{\mathrm{UL},\mathrm{DL},C\}}
\Big(
d^{i} U^{i}_{1}
+ \sum_{\ell=1}^{L_{i}-1} U^{i}_{\ell} U^{i}_{\ell+1}
\Big)
\Bigg)
\end{align*}
where $C$ is the final term that corresponds to the critic network. Alternative critics (MASC or AC) adjust this term. Training complexity scales linearly with $N_{\mathrm{train}}$.
Thus, the overall complexity of per-epoch computational cost is
\( 
O\!\left(T_{\mathrm{NE}} M K T_{\mathrm{LR}}\right)
+
O\!\left(2 M C_{\max}\right)
+
O\!\left(
B(d^{\mathrm{UL}} U^{\mathrm{UL}}_{1}
+ d^{\mathrm{DL}} U^{\mathrm{DL}}_{1}
+ d^{C} U^{C}_{1})
\right),
\)
where architecture-dependent layer-wise terms are absorbed into the big-$O$ notation. In contrast, online execution involves only forward inference and one knapsack computation, resulting in lightweight real-time computational costs.

\section{Related work}
\label{RelatedWork}

Our work relates to DT-assisted offloading \cite{wu2021digital, uhlemann2017digital} and collaborative MARL \cite{hou2022device, gu2022collaborative}. Still, it introduces INC-E, which poses unique challenges: dynamic, computation intensive user requests and strict latency requirements. Unlike traditional MEC approaches that use centralized or static methods, INC-E requires distributed, adaptive decision-making in highly dynamic environments.

Most DT-assisted offloading models consider MEC as the final computation node \cite{li2022digital, masaracchia2022digital, guo2023intelligent}. While DRL-based methods\cite{lu2021adaptive, liu2021digital, zhou2022energy} enhance offloading performance, they do not address INC-E's iterative in-network processing, where tasks are partially computed at multiple nodes before final aggregation at MEC. Consequently, INC-E requires multi-stage, decentralized decision-making as data traverse the network. In contrast, \cite{gao2026deterministic} and \cite{sidoretti2025dida} focus on centralized scheduling and data-plane inference, respectively, but do not address hierarchical user–operator interactions or joint UL/DL optimization for asymmetric XR workloads.

MARL approaches for edge-cloud systems focus on distributed resource allocation and offloading \cite{hou2022device, gu2022collaborative}. Many adopt centralized training with decentralized execution (CTDE), e.g., MAPPO \cite{xiao2023collaborative}, but they lack the asynchronous learning mechanisms necessary for INC-E.

Hybrid MARL architectures incorporate hybrid reward mechanisms \cite{van2017hybrid,yu2023asynchronous,wang2018handover},yet they do not accomodate INC-E's asymmetric, multi-layered structure, in which tasks are processed across multiple nodes before MEC aggregation. We draw from HRA and its multi-agent extensions, AHHC \cite{yu2023asynchronous} and MALS \cite{chua2024play},but these remain actor-centric and are incompatible with our Stackelberg framework, in which XUDs make offloading decisions. At the same time, the operator provides support through uplink and downlink agents.

To our knowledge, this is the first study on collaborative INC-E for dynamic resource allocation in metaverse offloading using a DT-assisted framework. Our Nash-AMRL approach captures asymmetric interactions and INC-assisted coupling, enabling stable learning across operator agents and reflecting user-operator interdependence. Comparative analysis demonstrates improvements in stability, convergence, and performance trade-offs.

\section{Conclusion}
\label{Conclusion}
We study the joint optimization of user association, task offloading, and energy allocation in multiuser DT-assisted INC-E using a game-theoretic approach. The operator manages wireless and computation resources, while XUDs autonomously offload tasks to maximize utility. We address asynchronous UL association and DL energy allocation via a Stackelberg game, developing adaptive AMRL algorithms for the operator and a decentralized algorithm for XUDs to achieve Nash equilibrium. Experimental results show that our scheme is computationally efficient, improving overall utility, UL rate, and energy consumption. AHMRL achieves the best balance among energy efficiency, latency, and utility, while demonstrating scalability and robustness to task variations. MASC provides a strong alternative for latency-sensitive scenarios. All MARL models considerably outperform heuristic approaches, supporting hybrid actor-critic methods and centralized training for future in-network-edge orchestration. Future research may explore diverse user-node configurations and variations in task sizes.

\bibliographystyle{IEEEtran}
\bibliography{main}
\vspace{-3\baselineskip}

\begin{IEEEbiography}[{\includegraphics[width=1in,height=1.25in,clip,keepaspectratio]{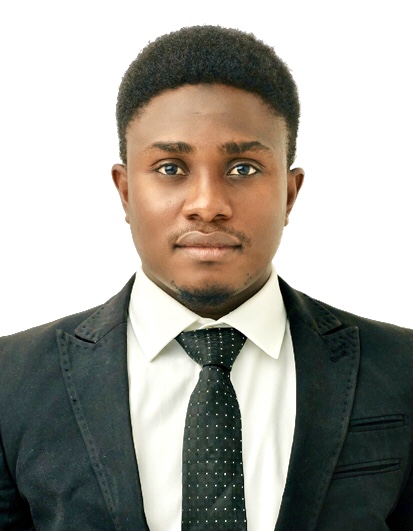}}]{Ibrahim Aliyu} (Member, IEEE) received a PhD in computer science and engineering from Chonnam National University in Gwangju, South Korea, in 2022. He also holds BEng (2014) and MEng (2018) degrees in computer engineering from the Federal University of Technology in Minna, Nigeria. He is currently a researcher with the Hyper Intelligence Media Network Platform Lab in the Department of Intelligent Electronics and Computer System Engineering at Chonnam National University. His research focuses on distributed computing for massive Metaverse deployment. His other research interests include federated learning, data privacy, network security, and artificial intelligence for autonomous networks.
Use 
\end{IEEEbiography}

\vspace{-2\baselineskip}
\begin{IEEEbiography}[{\includegraphics[width=1in,height=1.25in,clip,keepaspectratio]{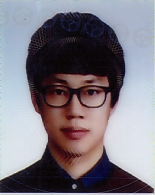}}]
    {Seungmin Oh}
 received a bachelor’s degree in the Department of Digital Contents from the Korea Nazarene University in 2019. He received his M.S. degree in the Department of ICT Convergence System Engineering from the Chonnam National University in 2021. He has been pursuing his Ph.D. in the Department of ICT Convergence System Engineering at Chonnam National University since 2021. His current research interests include deep learning and machine learning, and computer vision.
\end{IEEEbiography}

\vspace{-2\baselineskip}
\begin{IEEEbiography}[{\includegraphics[width=1in,height=1.25in,clip,keepaspectratio]{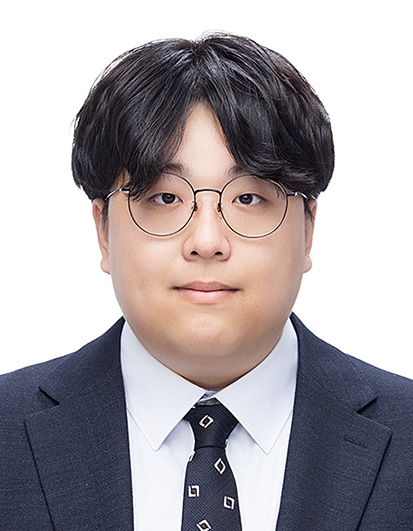}}]
    {Sangwon Oh}
 Sangwon Oh is a Ph.D. student at Chonnam National University, specializing in Generative Models, Networking, Metaverse, and AI technologies. His research primarily focuses on Diffusion-based time-series data generation and augmentation, as well as developing AI frameworks that bridge real and virtual networks. He is dedicated to advancing data analysis techniques, particularly in exploring the structural complexities of network metrics to enhance the synergy between artificial intelligence and next-generation communication systems..
\end{IEEEbiography}
\vspace{-2\baselineskip}
\begin{IEEEbiography}[{\includegraphics[width=1in,height=1.25in,clip,keepaspectratio]{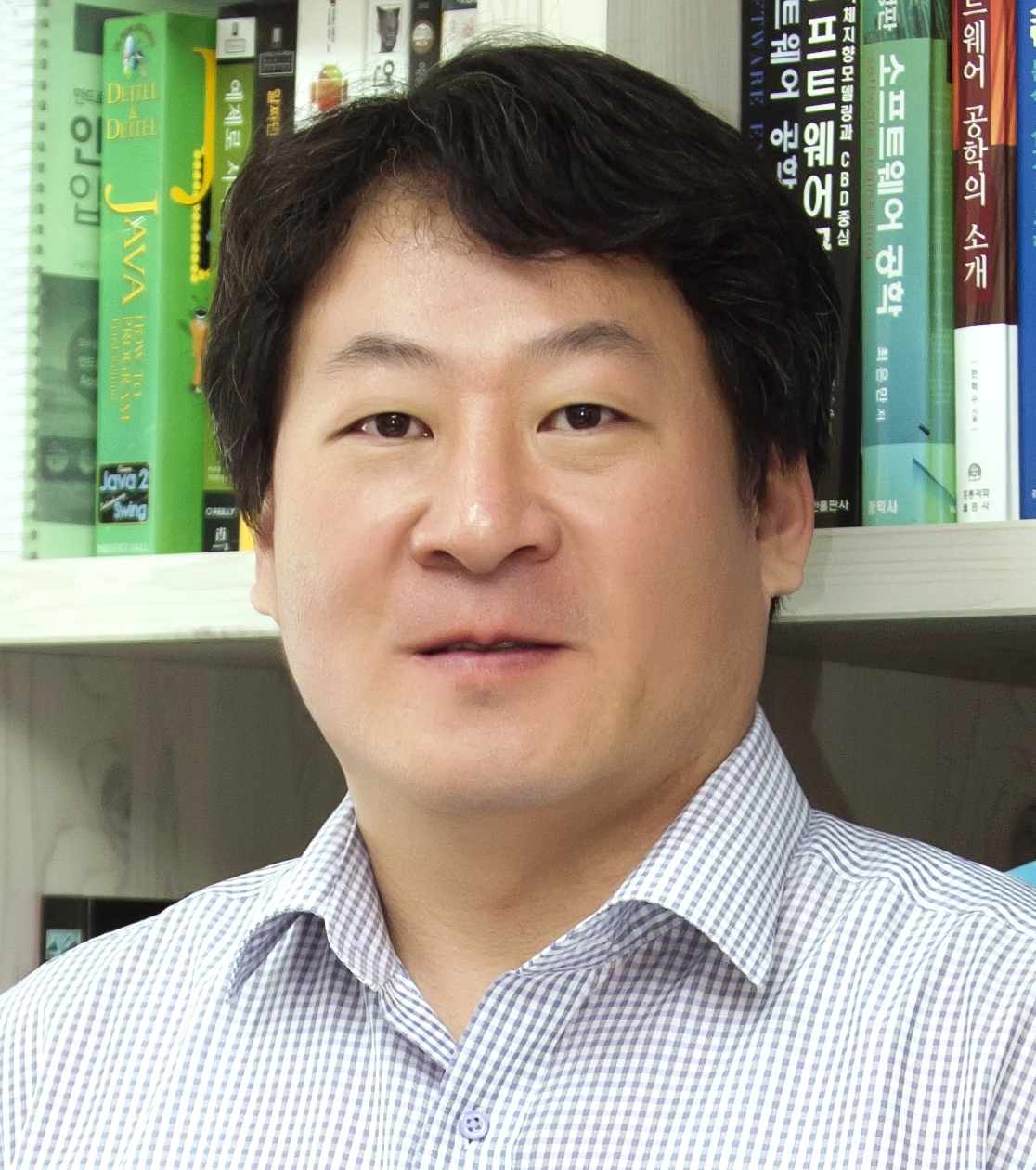}}]{Jinsul Kim}(Member, IEEE) received a BS degree in computer science from the University of Utah at Salt Lake City in Utah, USA, in 1998 and MS (2004) and PhD (2008) degrees in digital media engineering from the Korea Advanced Institute of Science and Technology (KAIST) in Daejeon, South Korea. Previously, he worked as a researcher at the Broadcasting/Telecommunications Convergence Research Division, Electronics and Telecommunications Research Institute (ETRI) in Daejeon, Korea, from 2004 to 2009 and was a professor at Korea Nazarene University in Cheonan, Korea, from 2009 to 2011. He is a professor at Chonnam National University, Gwangju, Korea. He is a member of the Korean national delegation for ITU-T SG13 international standardization. He has participated in various national research projects and domestic and international standardization activities. He is a co-research director of the Artificial Intelligence Innovation Hub Research and Development Project hosted by Korea University and is the director of the G5-AICT research center.
\end{IEEEbiography}

\end{document}